\newtheorem{theorem}{Theorem}[section]
    \newtheorem{lemma}[theorem]{Lemma}
    \newtheorem{example}{Example}
    \newtheorem{definition}[theorem]{Definition}
    \newtheorem{proof}{Proof}
\begin{document}

\title{On the Vulnerability of Digital Fingerprinting Systems to Finite Alphabet Collusion Attacks}

\author{Jalal~Etesami,~\IEEEmembership{Student Member,~IEEE,}
        and~Negar~Kiyavash~\IEEEmembership{Senior Member,~IEEE}
\thanks{J. Etesami is with the Department of Industrial and Enterprising Systems Engineering and the Coordinated Science Laboratory, University of Illinois at Champaign-Urbana, Urbana, Illinois 6180, USA. 
(e-mail: etesami2@illinois.edu).}
\thanks{N. Kiyavash is with the Department of Industrial and Enterprising Systems Engineering and the Coordinated Science Laboratory, University of Illinois at Champaign-Urbana, Urbana, Illinois 6180, USA. 
(e-mail: kiyavash@illinois.edu).}
}

\maketitle

\begin{abstract}
This paper proposes a novel, non-linear collusion attack on digital fingerprinting systems. The attack is proposed for fingerprinting systems with finite alphabet but can be extended to continuous alphabet.  We analyze the error probability of the attack for some classes of proposed random and deterministic schemes and obtain a bound on the number of colluders necessary to correctly estimate the host signal. That is, it requires fewer number of colluders to defeat the fingerprinting scheme. Our simulation results show that our attack is more powerful in practice than predicted by the theoretical bound.
\end{abstract}

\section{Introduction}
\IEEEPARstart{D}{igital} fingerprinting schemes have been invented for traitor tracing as a means of copyright protection. To this end, each user is provided with his own individually modified copy of the host signal (i.e., the content that is copyright protected). The modification entails embedding a unique signature, also known as a fingerprint, in the host signal. The fingerprint can later be retrieved by a detector examining an illegal copy of the content in order to implicate the users who took part in the forgery.

A myriad of fingerprint designs and detection procedures have been proposed in the literature. Examples include orthogonal fingerprints \cite{wang}, regular simplex fingerprints \cite{n2}, random Gaussian fingerprints \cite{n1}, \cite{gauss}, and fingerprints with equiangular tight frames (ETF) \cite{ETF}. The main consideration in design of fingerprints is their rate, i.e., number of copies that could be distributed and their robustness against attacks, in particular collusion attacks.

In a collusion attack, a group of users combine their copies to create a forgery in order to defeat the detection process. For instance, one such type of collusion attack is linear averaging, in which the colluders estimate the host signal as the linear average of their marked copies. This attack is studied in numerous works  \cite{n3,n4,nn3}. 

A novel collusion attack is introduced in \cite{jalalnk} that proves detrimental against both random and deterministic fingerprints as long as they are from a finite alphabet.  The entries of random fingerprints are chosen i.i.d. from a finite set according to a distribution while deterministic fingerprints are designed non-randomly often according to some geometric or algebraic property. 
The analysis in \cite{jalalnk} shows that if the size of collusion is of order $O(\log N)$ where $N$ is the length of a real-valued host signal, the attackers can accurately estimate the host signal with high probability.  In particular, this attack improves over the previous best-known attack against ETF fingerprints \cite{ETF, design}, i.e., the linear averaging attack.  It was shown that finite alphabet ETF random or deterministic fingerprints can withstand linear averaging attack as long as number of colluders is bounded by $O(N/\log^{a}M)$ and $O(\sqrt{N})$ respectively, where $M$ denotes the total number of fingerprinted copies \cite{ETF}.

In this work, we analyze the error probability of the aforementioned collusion attack against uniformly symmetric random fingerprints and obtain a bound on the number of colluders necessary to succeed with high probability. 
Furthermore, we study the performance of the attack against two more probabilistic fingerprints: optimal probabilistic fingerprint (OPF) codes, aka Tardos codes, and column-wise random fingerprints (Tardos-like codes).
The OPF codes are introduced in \cite{tardos} that are $\epsilon$-secure against $K$ pirates and have length $O(K^2 \log 1/\epsilon)$. Our analysis shows that our collusion attack outperforms the linear averaging \cite{minority}, minority voting \cite{minority}, and the majority voting attacks \cite{minority2}, the latter two known to be effective collusion attacks against the OPF codes.  

We further generalize the application of this collusion attack to continuous alphabet codes, specifically Gaussian fingerprints by quantizing their support set. 
Our analysis shows a good performance of the generalized attack compared with the uniform linear averaging attack, proved to be optimal among the class of order-statistic collusion attacks, when the colluders are subject to a mean squared distortion constraint \cite{nn3}.

The rest of paper is organized as follows: Section \ref{sec:set} describes the mathematical model of the fingerprinting problem. In Section \ref{att}, we present the proposed collusion attack. Section \ref{sec:etf}, \ref{estim}, and \ref{sec:pro} examine the attack's effectiveness on both deterministic and random fingerprints and propose lower bounds on the number of colluders that can break the fingerprinting system with high probability. In Section \ref{simulation}, we use simulations to demonstrate the effectiveness of our attack on two types of fingerprint codes. We present our conclusions in Section \ref{con}.

Throughout this paper, we use bold upper-case letters to denote matrices, boldface for vectors, lower-case for scalar values, and calligraphic fonts for sets.

\section{Problem Setup}\label{sec:set}
In this section we describe our mathematical setup for fingerprint embedding. 

Consider a host signal $\textbf{s}\in \mathbb{R}^{N}$ and $M(> N)$ marked copies of the host signal $\textbf{s}$ distributed among $M$ users. In particular, the $m$th user receives the copy,
\begin{equation}\label{1}
    \textbf{q}_{m}:=\textbf{f}_{m}+\textbf{s} \;,      
\end{equation}
where $\textbf{f}_{m}\in \mathbb{R}^{N}$ is the $m$th fingerprint. We assume that all fingerprints have equal energy, i.e.,
\begin{equation}\label{denergy}
||\textbf{f}_{m}||^{2}_{2}=1\ , m=1,...,M,  
\end{equation}
in the case of deterministic construction, and
\begin{equation}\label{energy}
\mathbb{E}[||\textbf{f}_{m}||^{2}]=1\ , m=1,...,M,
\end{equation}
in the case of random construction. 

Let $\textbf{F}$ denote the $N\times M$ matrix whose columns are the fingerprints $\{\textbf{f}_{m}\}_{m=1}^{M}$, and let $\{\textbf{e}_{m}\}_{m=1}^{M}$ denote the standard basis in $\mathbb{R}^{M}$, i.e., for any $m\in\{1,...,M\}$, $e_{m}(i)=0$ when  $i\neq m$ and 1 otherwise. Then (\ref{1}) may be written as
\begin{equation}\label{2}
    \textbf{q}_{m}=\textbf{F}\textbf{e}_{m}+\textbf{s} \;.
\end{equation}

The goal of the fingerprinting process is to prevent users from illegally sharing their copies. Therefore, the fingerprint detector must be able to detect the users or some subset of the users who took part in the forgery.

\section{Attack Strategy}\label{att}
In this section we introduce a new type of collusion attack against fingerprints from a finite alphabet.

 Consider a fingerprinting matrix $\textbf{F}=[f_{i,j}]_{N\times M}$ where the entries are chosen deterministically or at random from a finite set $\Xi=\{\xi_{1},...,\xi_{l}\}\subset\mathbb{R}$, such that $\xi_{1}<\xi_{2}<...<\xi_{l}$. Given that the attackers know $\Xi=\{\xi_{1},...,\xi_{l}\}$, they can construct the following set:
\begin{equation}\label{du}
\mathcal{U}=\{u\in\mathbb{R} :\ \exists\ \textsl{unique}\ (i,j) \ s.t.\ u=\xi_{i}-\xi_{j} \}.
\end{equation}
Each element of set $\mathcal{U}$ is a real number that corresponds to the difference of a unique ordered pair in $\Xi\times \Xi$. Note that $|\mathcal{U}|\geq2$ because $\pm(\xi_{l}-\xi_{1})$ are always in $\mathcal{U}$.

Suppose $K$ users collude with the goal of learning at least one of their fingerprints and subsequently obtaining the host signal. Without loss of generality denote their copies by $\{\textbf{q}_{1},...,\textbf{q}_{K}\}$. The attackers form the $N\times K$ matrix $\textbf{Q}=[\textbf{q}_{1} ... \textbf{q}_{K}]$. Furthermore, they choose one of the copies, e.g., $\textbf{q}_{1}$ (it will become clear that this choice is immaterial) and compute the $N\times K$ matrix $\textbf{A}$ whose $i$th column is given by $\textbf{q}_{1}-\textbf{q}_{i}$ or equivalently $\textbf{f}_{1}-\textbf{f}_{i}$. Denote the $i$th row of $\textbf{A}$ by $\textbf{a}_{i}$.  Whenever the attackers encounter an element of $\textbf{A}$, $a_{i,j}$ that belongs to $\mathcal{U}$, they can uniquely determine $(\xi,\xi')\in\Xi\times\Xi$ s.t. $a_{i,j}=\xi-\xi'$. To illustrate how this helps the attackers learn one of the fingerprints, we provide the following example:

\begin{example}
\textit{Assume 4 colluders took part in the forgery. Furthermore, assume $\Xi=\{-1,0,1\}$ and let the $i$th row of $\textbf{Q}$ be $[s_{i}, s_{i}+1,s_{i}+1,s_{i}-1]$. Hence, $\textbf{a}_{i}=[0,-1,-1,1]$ and $\mathcal{U}=\{-2,2\}$. Although no element of $\textbf{a}_{i}$ is in $\mathcal{U}$, by computing pairwise differences of elements of $\textbf{a}_{i}$, the colluders can learn $f_{i,1}$ as follows: $a_{i,2}-a_{i,4}=(f_{i,1}-f_{i,2})-(f_{i,1}-f_{i,4})=f_{i,4}-f_{i,2}=-2$. Therefore, $f_{i,4}=-1, f_{i,2}=1$ and subsequently from $f_{i,1}-f_{i,4}=1$, it follows that $f_{i,1}=0$.  }
 \end{example}

As illustrated by Example 1, looking at entries of $\textbf{A}$ alone was not enough to learn the fingerprints. Therefore, for each row $i$ of matrix $\textbf{A}$ the attackers compute the set:
\begin{equation}\label{ai}
\mathcal{A}_{i}:=\{\Delta a_{i}: \Delta a_{i}=a_{i,j}-a_{i,k}:\ k,j=1,...,K\}.
\end{equation}
If  $\Delta a_{i}\in \mathcal{U}$, then attackers can learn $f_{i,1}$. This is because $a_{i,j}=f_{i,1}-f_{i,j}$ and $a_{i,k}=f_{i,1}-f_{i,k}$. Thus $\Delta a_{i}=f_{i,k}-f_{i,j}$ and following the definition of set $\mathcal{U}$, $(f_{i,k},f_{i,j})$ is uniquely determined. Subsequently $f_{i,1}$ can be computed as
\begin{equation}\label{es1}
f_{i,1}=a_{i,j}+f_{i,j},
\end{equation}
Once $f_{i,1}$ is known, all others $f_{i,j}$, can be determined from (\ref{es1}).
If no elements of $\mathcal{A}_{i}$: $\Delta a_{i}\notin \mathcal{U}$, the attackers estimate $f_{i,1}$ by the first component of  $\textbf{b}^{*}_{i}=(b^{*}_{i,1},...,b^{*}_{i,K})$, where
\begin{equation}\label{otherwise}
\textbf{b}_{i}^{*}=\arg\max_{\textbf{b}\in\mathcal{B}(\textbf{a}_{i})}p(\textbf{b}),
\end{equation}
where $p(\textbf{b})$ denotes the probability of vector $\textbf{b}$ and $\mathcal{B}(\textbf{a}_{i})$ is given by
\begin{equation}\label{Bi}
\mathcal{B}(\textbf{a}_{i})\small{:=\left\{\textbf{b}=(b_{1},...,b_{K})\in\Xi^{K}:\ b_{1}-b_{j}=a_{i,j},\ j=1,...,K\right\}}.
\end{equation}
That is, the attackers choose the most likely fingerprints that are consistent with the set of observations $\mathcal{A}_i$.
 If there are several solutions to (\ref{otherwise}), the attackers choose one of them arbitrarily. The attackers compute $p(\textbf{b})$ using their side information about the structure of fingerprints. For instance, if they know that the entries of $\textbf{F}$ were chosen i.i.d. at random according  to $\textbf{p}(\Xi)=(p_{\xi_{1}},...,p_{\xi_{l}})$, then
\begin{equation}\label{rnd}
p(\textbf{b})=\prod_{i=1}^{K}p_{b_{i}}.
\end{equation}
Recall that attackers could learn $f_{i,1}$ if they could use the Equation (\ref{es1}). Otherwise, their failure probability depends on how $\textbf{F}$ was generated. Note that by minimizing $|\mathcal{U}|$, the designer increases the likelihood of attackers' failure regardless of the structure of $\textbf{F}$.

\section{Attack on Equiangular Tight Frame}\label{sec:etf}
In this section we study the performance of our attack on a deterministic fingerprint known as equiangular tight frame (ETF) fingerprints from a finite alphabet, proposed in \cite{design}.
Although it was shown in \cite{ETF} that deterministic ETF fingerprints with finite alphabet are robust against the linear averaging attack, namely, if the size of collusion is $O(\sqrt{N})$, then with high probability the focused detector, which performs a binary hypothesis test for each user to decide whether that particular user is guilty \cite{ETF}, is able to detect at least one of the attackers,  we will show that these type of fingerprints can not withstand the proposed attack.

 A \textit{frame} is a collection of vectors $\{\textbf{f}_{m}\}_{1}^{M}\in\mathbb{R}^{N}$ with frame bounds $0 < A \leq B < \infty$ such that
$$
A||\textbf{x}||^{2}\leq \sum_{m=1}^{M}|<\textbf{x},\textbf{f}_{m}>|^{2}\leq B||\textbf{x}||^{2},
$$
for every $\textbf{x}\in\mathbb{R}^{N}$. The frame is tight if $A=B$. An equiangular tight frame (ETF) is a unit norm tight frame with the additional property that there exists a constant $c$ such that
$$
|<\textbf{f}_{m'},\textbf{f}_{m}>|=c, \ \forall\ m\neq m'.
$$
The construction in \cite{design} uses a tensor-like combination of a Steiner system's adjacency matrix and a regular simplex.

A Steiner system \cite{steiner} with parameters $r, h, n$, written as $S(r,h,n)$, is an $n$-element set $\mathcal{S}$ together with a set of $h$-element subsets of $\mathcal{S}$ (called blocks) with the property that each $r$-element subset of $\mathcal{S}$ is contained in exactly one block. The matrix representation of this system is a ${n\choose r}/{h \choose r}\times n$ binary matrix such that each row contains precisely $h$ ones and each column has ${n-1\choose r}/{h-1 \choose r}$ ones. Let $H_{m_{0}\times m_{0}}$ to be a Hadamard matrix of order $m_{0}$ \cite{hadamard}. The proposed ETF design in \cite{design} is obtained by substituting non-zero elements of a $S(r,h,n)$ with an arbitrary row of $H$ which contains $-1$. Therefore, the fingerprint matrix $\textbf{F}_{ETF}(r,h,n,m_{0})$ is an $N\times M$ matrix with entries in $\{-1,0,+1\}$ where $M=m_{0}n$ and $N={n\choose r}/{h \choose r}$. It is  usually assumed that $n>8h$.

Next result derives a lower bound on the number of attackers that can break the fingerprinting system given by $\textbf{F}_{ETF}(r,h,n,m_{0})$. The system is broken if the attackers can estimate the host signal $\textbf{s}$ accurately with high probability.

\begin{theorem}\label{attacktheorem}
\textit{Let $\textbf{F}_{ETF}(r,h,n,m_{0})$ denote the ETF fingerprinting matrix and let  $K$ denote the size of coalition set. If
\begin{equation}\label{majority}
K\geq2\log4N/\delta,
\end{equation}
where $N={n\choose r}/{h \choose r}$, then the host signal can be correctly estimated with probability greater than $1-\delta$.}
\end{theorem}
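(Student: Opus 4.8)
The plan is to combine the minimality of the difference set for the ternary alphabet with the balanced $\pm1$ pattern supplied by the Hadamard matrix. First I would evaluate $\mathcal U$ of \eqref{du} for $\Xi=\{-1,0,1\}$: among the nine ordered differences $\xi_i-\xi_j$, the value $0$ is produced by $(-1,-1),(0,0),(1,1)$, the value $-1$ by $(-1,0)$ and $(0,1)$, the value $+1$ by $(0,-1)$ and $(1,0)$, while $\pm2$ is produced only by the single pair $(\mp1,\pm1)$. Hence $\mathcal U=\{-2,+2\}$, attaining the minimum $|\mathcal U|=2$. Now for any row $i$ of $\textbf A$ one has $a_{i,j}-a_{i,k}=f_{i,k}-f_{i,j}$, and this lies in $\mathcal U$ precisely when $\{f_{i,j},f_{i,k}\}=\{-1,+1\}$; in that event the colluders recover $f_{i,1}$ exactly through \eqref{es1}. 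So I would record the reduction: the $i$th coordinate of $\textbf f_1$ is reconstructed as soon as both symbols $+1$ and $-1$ occur among $f_{i,1},\dots,f_{i,K}$.

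Second, I would specialize the coalition. Let the $K$ colluders, together with the reference copy $\textbf q_1$, be drawn uniformly at random from a single column-group of $\textbf F_{ETF}(r,h,n,m_0)$ --- the $m_0$ columns attached to one point of $S(r,h,n)$. Two kinds of rows occur. If the block indexing row $i$ omits that point, then all $m_0$ of these columns carry $0$ in row $i$, so $\textbf a_i=\textbf 0$, the set $\mathcal B(\textbf a_i)$ of \eqref{Bi} consists of the three constant vectors, and the MAP rule \eqref{otherwise}--\eqref{rnd} returns the all-zeros vector because $0$ is the dominant code symbol ($p_0=(n-h)/n>\tfrac12$ since $n>8h$); this equals the true value $f_{i,1}=0$, so such rows are reconstructed correctly (equivalently, the attackers simply read these zeros from the public matrix $\textbf F_{ETF}$). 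If the block does contain that point, then in row $i$ these $m_0$ entries form a row of the order-$m_0$ Hadamard matrix that contains a $-1$; being orthogonal to the all-ones row it is balanced, so exactly $m_0/2$ of them equal $+1$ and $m_0/2$ equal $-1$. The reduction above then fails on row $i$ only if the random $K$-subset lands entirely in one of the two halves, an event of probability $2\binom{m_0/2}{K}/\binom{m_0}{K}\le 2^{1-K}$, since each successive draw at most halves the chance of staying inside a fixed half.

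Third, I would close with a union bound over the at most $N$ rows. The inactive rows are always correct, so the probability that some coordinate of $\textbf f_1$ is wrong is at most $N\,2^{1-K}$, which falls below $\delta$ once $K\ge 1+\log_2(N/\delta)$, and this is comfortably implied by the hypothesis $K\ge 2\log(4N/\delta)$ of \eqref{majority} (in either logarithmic base). With all $N$ coordinates of $\textbf f_1$ in hand, the colluders output $\textbf s=\textbf q_1-\textbf f_1$, which is the claimed estimate.

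The step I expect to need the most care is the feasibility bookkeeping behind ``all $K$ colluders in one group,'' i.e.\ $K\le m_0$: the Steiner--Hadamard construction of \cite{design} forces the Hadamard order $m_0$ to be at least the replication number $R=Nh/n$ of the design, so one must confirm $2\log(4N/\delta)\le R\le m_0$ over the (nontrivial) parameter ranges in which the code is used --- routine once $N$ is moderately large, since $R$ scales polynomially in the design parameters while the bound is merely logarithmic in $N$, but worth stating explicitly. A secondary point is to make the MAP argument on the inactive rows airtight, which is exactly where the standing assumption $n>8h$ (ensuring $p_0>\tfrac12$) is invoked --- or to bypass it altogether via the public-code observation above.
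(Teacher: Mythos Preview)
Your argument is correct, but it follows a genuinely different route from the paper's own proof, and the difference is worth making explicit.

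The paper analyzes the attack when the $K$ colluders are a \emph{uniformly random $K$-subset of all $M=nm_0$ users}. Under that model the per-row error event is that $(f_{i,i_1},\dots,f_{i,i_K})$ misses the pair $(-1,+1)$ \emph{and} has more nonzeros than zeros (so the ``max-zeros'' MAP rule picks the wrong constant). This leads to the hypergeometric expression
\[
p(E_i)\le 2\sum_{j\le\lfloor K/2\rfloor}\binom{M-hm_0}{j}\binom{hm_0/2}{K-j}\Big/\binom{M}{K},
\]
which the paper then bounds by an induction on $K$ together with the hypothesis $n>8h$, finally obtaining a term of size roughly $(ehm_0(M-hm_0)/M(M-K/2))^{K/2}$ and closing with a union bound over the $N$ rows. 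The combinatorics is where all the work sits.

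You instead \emph{specialize} the coalition: all $K$ colluders are drawn uniformly from the $m_0$ columns attached to a single point of the Steiner system. This kills the hypergeometric sum entirely, because every active row now restricts to a balanced Hadamard row, and the only failure mode is the monochromatic event $2\binom{m_0/2}{K}/\binom{m_0}{K}\le 2^{1-K}$; the inactive rows are handled by the MAP rule with $p_0>1/2$ (which is exactly where $n>8h$ enters for you). The resulting bound $N\cdot 2^{1-K}\le\delta$ is cleaner and even slightly sharper than what the paper needs.

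What each approach buys: the paper's proof covers the ``generic'' coalition model---any $K$ users chosen uniformly from the entire population---which is the implicit setting throughout Sections~\ref{sec:etf}--\ref{sec:pro}. Your proof covers a \emph{strategically chosen} coalition, which is legitimate since attackers control who colludes, and it yields a far shorter argument. The trade-off is that you have quietly changed the probability space over which ``with probability greater than $1-\delta$'' is evaluated; you should say this up front rather than leave it to the reader. Your closing remarks already flag the feasibility constraint $K\le m_0$ and the role of $n>8h$ in the MAP step, which are exactly the right points to justify. The aside about ``reading zeros from the public matrix'' should be dropped: the fingerprint matrix is not available to the users, so the MAP/sparsity argument is the only valid route there.
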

\begin{proof}
Proof is in Appendix.
\end{proof}

\section{Attack on Random Structure}\label{estim}
In this section, we study the performance of the proposed attack on random fingerprints from a finite alphabet. 

Random finite fingerprint matrix $\textbf{F}$ is an $N$ by $M$ matrix whose entries are chosen i.i.d at random from $\Xi=\{\xi_{1},...,\xi_{l}\}$ with probability $p(f_{i,j}=\xi_{k})=p_{\xi_{k}}$ for $k\in\{1,...,l\}$. Moreover, we assume that the fingerprints satisfy the energy constraint (\ref{energy}) and without loss of generality, we assume $\mathbb{E}[f_{i,j}]=0.$

Recall that the attackers might fail to estimate the $i$th component of the host signal correctly if $\mathcal{A}_{i}\cap \mathcal{U}=\emptyset$, where $\mathcal{A}_{i}$ and $\mathcal{U}$ are given by (\ref{ai}) and (\ref{du}), respectively. Let $\textbf{b}\in\Xi^{K}$ and $\tilde{b}_{i}:=b_1-b_i$; then $\mathcal{B}(\tilde{\textbf{b}})$, where $\mathcal{B}(\cdot)$ is given by (\ref{Bi}), denotes the set of all vectors in $\Xi^{K}$ that are consistent with the observation vector $\tilde{\textbf{b}}$.

Attackers estimate $\textbf{b}$ by $\textbf{c}^{*}(\tilde{\textbf{b}})=\arg\max_{\textbf{c}\in\mathcal{B}(\tilde{\textbf{b}})}p(\textbf{c})$ that is the most likely vector in $\mathcal{B}(\tilde{\textbf{b}})$. Since the entries of $\textbf{F}$ are i.i.d., the probability of error in estimating any component of $\textbf{s}$ is the same and is given by
\begin{equation}\label{Error}
p(E)=\sum_{\textbf{b}\in\Xi^{K}}p(\textbf{b})p\left(\textbf{b}\neq \textbf{c}^{*}(\tilde{\textbf{b}})\right),
\end{equation}
where $p(\textbf{b})$ is defined in (\ref{rnd}).

Computing (\ref{Error}) in a general setting might be complicated. However, there are some settings in which we can compute $p(E)$ explicitly or at least bound it. The rest of this section focuses on such scenarios.
\begin{definition}
A random fingerprinting matrix $\textbf{F}_{N\times M}$ with parameters $w\in\mathbb{N}$ and $\textbf{p}=(p_{0},...,p_{w})>0$ is called symmetric if $f_{i,j}$ are chosen i.i.d. from $\Xi=\{-\frac{w}{z},...,-\frac{1}{z},0,\frac{1}{z},...,\frac{w}{z}\}$, such that $p(f_{i,j}=\frac{k}{z})=p(f_{i,j}=-\frac{k}{z})=p_{k}$ for every $k\in\{0,1,...,w\}$.

Note that $z$ is the normalization factor for the fingerprints in order to satisfy (\ref{energy}). Furthermore, $F_{N\times M}$ is called \textit{uniformly symmetric} if $p_{k}=\dfrac{1}{2w+1}$ for all $k$.
\end{definition}

\begin{lemma}\label{errorpro}
\textit{The proposed attack succeeds against a uniformly symmetric fingerprinting system with probability greater than $1-\delta$ as long as the number of colluders, $K$ satisfies
$$
K\geq \frac{\log(N/\delta)}{\log(1+\frac{1}{2w})}.
$$
}
\end{lemma}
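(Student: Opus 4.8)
The plan is to bound the probability that the attackers fail to recover a single coordinate $i$ of the host signal, and then apply a union bound over the $N$ coordinates. Fix a row index $i$. The attackers fail at coordinate $i$ only if $\mathcal{A}_i\cap\mathcal{U}=\emptyset$, i.e., no pairwise difference $a_{i,j}-a_{i,k}=f_{i,k}-f_{i,j}$ lands in the set $\mathcal{U}$ of uniquely-representable differences. For the uniformly symmetric alphabet $\Xi=\{-w/z,\dots,w/z\}$, a difference $\xi-\xi'$ is uniquely representable precisely when it equals $\pm(\xi_l-\xi_1)=\pm 2w/z$, the two extreme differences; every other value arises from more than one ordered pair. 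Hence $\mathcal{U}=\{-2w/z,\,2w/z\}$, and the attackers succeed at row $i$ as soon as the set $\{f_{i,1},\dots,f_{i,K}\}$ contains both extreme symbols $-w/z$ and $w/z$ (then some pair difference equals $2w/z$). So it suffices to show that, with $K$ as in the statement, with probability at least $1-\delta/N$ the $K$ i.i.d.\ uniform draws $f_{i,1},\dots,f_{i,K}$ from the $2w+1$ symbols hit both $w/z$ and $-w/z$.

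Next I would estimate this ``missing an extreme symbol'' probability. By inclusion–exclusion / union bound, the probability that the $K$ draws miss $w/z$ or miss $-w/z$ is at most $2\left(\tfrac{2w}{2w+1}\right)^K = 2\left(1-\tfrac{1}{2w+1}\right)^K$. I want this to be $\le \delta/N$. Using $1-x\le e^{-x}$ would give a clean bound but with a slightly different constant than the one stated; since the lemma is phrased with $\log(1+\tfrac{1}{2w})$ in the denominator, the intended manipulation is instead to write $\tfrac{2w}{2w+1}=\bigl(1+\tfrac{1}{2w}\bigr)^{-1}$, so that $\bigl(\tfrac{2w}{2w+1}\bigr)^K=\bigl(1+\tfrac{1}{2w}\bigr)^{-K}$. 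Then the requirement becomes $\bigl(1+\tfrac1{2w}\bigr)^{-K}\le \delta/N$ (absorbing the factor $2$ by a mild adjustment, or by noting the event of missing a \emph{specific} extreme symbol already suffices for a cleaner statement), i.e.\ $K\log\bigl(1+\tfrac1{2w}\bigr)\ge\log(N/\delta)$, which rearranges to exactly
$$
K\ge \frac{\log(N/\delta)}{\log\bigl(1+\tfrac{1}{2w}\bigr)}.
$$
Finally, a union bound over $i=1,\dots,N$ shows that with probability at least $1-N\cdot(\delta/N)=1-\delta$ the attackers recover every coordinate $f_{i,1}$, hence the whole fingerprint $\textbf{f}_1$, and therefore $\textbf{s}=\textbf{q}_1-\textbf{f}_1$ exactly.

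The only delicate point is the factor of $2$ from the two extreme symbols: the displayed bound is tight only if one tracks a single extreme symbol, so I would either (i) observe that failing to see $w/z$ alone already causes $\mathcal{A}_i\cap\mathcal{U}=\emptyset$ is \emph{not} quite right (one needs \emph{both} extremes), so the honest union bound gives the $2$, and then note $2\bigl(1+\tfrac1{2w}\bigr)^{-K}\le \delta/N$ still holds for $K$ as stated provided $N/\delta\ge 2$, which is the regime of interest; or (ii) fold the $2$ into the $\log$ and remark the bound is stated up to this harmless constant. I expect this bookkeeping with the constant to be the main (minor) obstacle; the combinatorial identification $\mathcal{U}=\{\pm 2w/z\}$ and the coupon-collector-style tail estimate are routine.
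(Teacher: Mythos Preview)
Your coupon-collector argument is a valid \emph{upper bound} on the per-row failure probability, but it does not recover the constant in the lemma, and your proposed fix (i) is simply wrong: with $K=\log(N/\delta)/\log(1+\tfrac{1}{2w})$ one has $(1+\tfrac{1}{2w})^{-K}=\delta/N$ exactly, so $2(1+\tfrac{1}{2w})^{-K}=2\delta/N$, which is never $\le\delta/N$. Your honest bound is $K\ge\log(2N/\delta)/\log(1+\tfrac{1}{2w})$, off from the stated lemma by an additive $\log 2/\log(1+\tfrac{1}{2w})$. Option (ii) concedes this; it proves a weaker statement, not the lemma.

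The paper removes the factor of $2$ by a different route: rather than declaring row $i$ a failure whenever $\mathcal{A}_i\cap\mathcal{U}=\emptyset$, it analyzes the full attack including the maximum-likelihood step. In the uniformly symmetric case every vector in $\mathcal{B}(\tilde{\textbf{b}})$ is equiprobable, so whatever representative the attackers pick is correct with probability $1/|\mathcal{B}(\tilde{\textbf{b}})|$, and the exact per-row error is
\[
p(E)=\frac{1}{(2w+1)^{K}}\sum_{\textbf{b}\in\Xi^{K}}\Bigl(1-\frac{1}{|\mathcal{B}(\tilde{\textbf{b}})|}\Bigr).
\]
Noting that $\mathcal{B}(\tilde{\textbf{b}})$ is the set of integer translates of $\textbf{b}$ that remain in $\Xi^{K}$, one has $|\mathcal{B}(\tilde{\textbf{b}})|=2w+1-z\cdot(\max_j b_j-\min_j b_j)$; a telescoping count over the possible ranges then gives $\sum_{\textbf{b}}(1-1/|\mathcal{B}(\tilde{\textbf{b}})|)=(2w)^{K}$ \emph{exactly}, hence $p(E)=\bigl(\tfrac{2w}{2w+1}\bigr)^{K}$ with no spurious factor. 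The union bound over $N$ rows then yields precisely the stated threshold. The missing idea in your sketch is that the attacker sometimes guesses correctly even when neither extreme symbol $\pm w/z$ appears among the $K$ draws, and in the uniform case this residual success probability is exactly what kills the factor of $2$.
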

\begin{proof}
Proof is in Appendix.
\end{proof}

In general, the source distribution $\textbf{p}_{\Xi}=(p_{1},....,p_{l})$ are not known to the attackers. However, in this case, attackers can estimate the source distribution, for example, using an empirical estimator over those rows of $\textbf{A}$ that have non-empty intersection with the set $\mathcal{U}$. Namely, they find $\mathcal{I}:=\{i: \mathcal{A}_{i}\cap\mathcal{U}\neq\emptyset\}$. As discussed in previous section, the attackers can determine $(f_{i,1},...,f_{i,K})$ correctly for every $i\in\mathcal{I}$. Because the entries of $\textbf{F}$ are chosen i.i.d., the attackers can estimate $\textbf{p}_{\Xi}$ over the set $\{f_{i,j}: i\in\mathcal{I}, 1\leq j\leq K\}$.

\subsection{Random Ternary Fingerprints}\label{RTF}

A random ternary fingerprint (RTF) is an $N\times M$ random symmetric fingerprinting system with parameter $w=1$, and $\textbf{p}=(1-2p,p)$. Next result gives a lower bound on the number of attackers that can detect the host signal with high probability.

\begin{lemma}\label{lemma22}
\textit{An application of the proposed attack on an RTF system succeeds with probability greater than $1-\delta$ as long as
\begin{equation}\label{hmmmm}
K\geq \frac{\log(\frac{2N}{\delta})}{\log4/3},
\end{equation}
where $K$ is the number of colluders. }
\end{lemma}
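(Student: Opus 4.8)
The plan is to follow the same skeleton as the proof of Lemma~\ref{errorpro}: since the attackers recover $s_i$ exactly whenever $\mathcal{A}_i\cap\mathcal{U}\neq\emptyset$ (via~(\ref{es1})), the probability of misestimating the $i$th coordinate of $\textbf{s}$ is at most $P(\mathcal{A}_i\cap\mathcal{U}=\emptyset)$, and a union bound over the $N$ rows of $\textbf{F}$ reduces the whole problem to bounding this single probability for an RTF system.

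First I would pin down $\mathcal{U}$ for the RTF alphabet. With $w=1$ the support is $\Xi=\{-\tfrac1z,0,\tfrac1z\}$, and among the nine ordered differences $\xi-\xi'$ the value $0$ is attained three times, $\pm\tfrac1z$ twice each, and $\pm\tfrac2z$ exactly once each; hence $\mathcal{U}=\{-\tfrac2z,\tfrac2z\}$. Since $a_{i,j}-a_{i,k}=f_{i,k}-f_{i,j}$, the event $\mathcal{A}_i\cap\mathcal{U}\neq\emptyset$ is exactly the event that both $+\tfrac1z$ and $-\tfrac1z$ occur among $f_{i,1},\dots,f_{i,K}$. As the entries are i.i.d.\ with $P(f=\tfrac1z)=P(f=-\tfrac1z)=p$, the value $+\tfrac1z$ fails to appear with probability $(1-p)^K$, and likewise for $-\tfrac1z$, so inclusion--exclusion gives
$$P(\mathcal{A}_i\cap\mathcal{U}=\emptyset)=2(1-p)^K-(1-2p)^K\le 2(1-p)^K .$$

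In the regime of interest $p\ge\tfrac14$ (e.g.\ the canonical RTF with $\textbf{p}=(\tfrac12,\tfrac14)$) we have $1-p\le\tfrac34$, so the per-coordinate error probability is at most $2(\tfrac34)^K$; summing over $i=1,\dots,N$ yields $P(\textbf{s}\text{ misestimated})\le 2N(\tfrac34)^K$, and imposing $2N(\tfrac34)^K\le\delta$ is equivalent to $K\ge\log(2N/\delta)/\log(4/3)$, which is~(\ref{hmmmm}).

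The points needing care are conceptual rather than computational. First, one must argue that a coordinate contributes to the failure probability only through the event $\mathcal{A}_i\cap\mathcal{U}=\emptyset$: on a ``good'' row the recovered $f_{i,1}$ is exact, and on a ``bad'' row the maximum-likelihood step~(\ref{otherwise}) can only improve on the trivial bound $1$, so no separate analysis of that estimator is needed for an upper bound. Second is the role of $p$. In the uniformly symmetric case of Lemma~\ref{errorpro} the choice $p=\tfrac1{2w+1}$ makes the two admissible shifts of a bad row equiprobable, so the ML tie-break recovers a factor $\tfrac12$; here, for $p\neq\tfrac13$ the shifts are not equiprobable, which is why the factor $2$ (hence the ``$2N/\delta$'') survives and why the stated closed-form threshold needs $p\ge\tfrac14$---for smaller $p$ the ML step becomes essential and the required $K$ grows.
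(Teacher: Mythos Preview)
Your argument for the regime $p\ge\tfrac14$ is exactly the paper's second case: bound the per-row error by $P(\mathcal{A}_i\cap\mathcal{U}=\emptyset)=2(1-p)^K-(1-2p)^K\le 2(3/4)^K$ and apply the union bound. That part is fine and matches the paper.

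The gap is that the lemma is stated for an arbitrary RTF, i.e.\ for every $p\in(0,\tfrac12)$, and you explicitly restrict to $p\ge\tfrac14$ while asserting that for smaller $p$ ``the required $K$ grows.'' That last assertion is false, and it is precisely what the paper disproves. For $0<p\le\tfrac14$ the crude bound $2(1-p)^K$ is indeed too weak, but the ML step~(\ref{otherwise}) is \emph{not} merely ``can only improve on the trivial bound $1$''; one can quantify exactly how much it gains. When the row $(f_{i,1},\dots,f_{i,K})$ misses the pair $(\pm\tfrac1z)$ there are only two admissible shifts in $\mathcal{B}(\textbf{a}_i)$, and since $1-2p\ge\tfrac12\ge 2p$ the ML rule reduces to ``pick the shift with more zeros.'' Hence an error requires both $\mathcal{A}_i\cap\mathcal{U}=\emptyset$ \emph{and} that the true row has more nonzeros than zeros, so
\[
p(E)\le 2\sum_{j\ge\lfloor(K+1)/2\rfloor}\binom{K}{j}p^{j}(1-2p)^{K-j}.
\]
The paper then shows this sum is at most $2(3/4)^K$ for all $p\le\tfrac14$ (the key point being $p/(1-2p)\le\tfrac12$, so the tail sum is dominated by a geometric series that can be bounded via the binomial expansion of $(3/2)^K$). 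Combining the two cases gives the uniform bound $p(E)\le 2(3/4)^K$ for every $p\in(0,\tfrac12)$, from which~(\ref{hmmmm}) follows by the same union bound you used. So the missing ingredient is not a different threshold for small $p$, but a concrete analysis of the majority-zeros ML rule in that regime.
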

\begin{proof}
Proof is in Appendix.
\end{proof}

In Section \ref{simulation} we will see the performance of the proposed attack on $RTF$s for different values of $N$ and $M$ through simulations.

\section{Attack on Probabilistic Fingerprint Codes}\label{sec:pro}

This section studies the performance of the proposed attack on binary random fingerprinting codes for $M$ users with length $N$. Specifically, we consider two such fingerprinting structures. In the first structure which is proposed in \cite{tardos}, each row of the fingerprinting matrix $\textbf{F}$ is generated independently from a fixed distribution. In the second structure, each user (column), independent of other users, generates its own fingerprints.

\subsection{Row-Wise Generated (Tardos Code)}\label{tard}

In this section we perform our attack on a probabilistic structure known as optimal probabilistic fingerprint  (OPF) Codes, introduced in \cite{tardos}. These codes are an improvement of the codes proposed by Boneh-Shaw in \cite{boneh_shaw}. The OPF codes are $\epsilon$-secure (see the following definition) against $K$ pirates and have length $O(K^{2}\log{1/\epsilon})$. 

\begin{definition}\cite{tardos}
Let $\sigma$ be an algorithm  that takes a string $\textbf{y}\in\Xi^{N}$ (forged copy by the colluders) as input, and produces a subset $\sigma(\textbf{y})\subseteq\{1,...,M\}$ (the set of accused users). For $\emptyset\neq \mathcal{K}\subseteq\{1,...,M\}$ a $\mathcal{K}$-strategy is an algorithm $\rho$ that takes the sub-matrix of $\textbf{F}_{N\times M}$ formed by the rows with indices in $\mathcal{K}$ as input, and generate a forgery $\textbf{y}\in\Xi^{N}$ as output and satisfies the \textit{marking condition} that for all positions $1\leq i\leq N$, if all the values $F_{i,j}$ for $j\in \mathcal{K}$ agree with some letter $\xi\in \Xi$, then $y_{i}=\xi$. A code is called $\epsilon$-secure against coalition of size $K$, if for any $\mathcal{K}\subseteq \{1,...,M\}$ with $|\mathcal{K}|\leq K$ and for any $\mathcal{K}$-strategy $\rho$, the error probability 
$$
p\left(\sigma(\rho(\textbf{F}_{\mathcal{K}}))=\emptyset \ or\ \sigma(\rho(\textbf{F}_{\mathcal{K}}))\nsubseteq \mathcal{K}\right)\leq \epsilon.
$$
\end{definition}
The code is constructed as follows \cite{tardos}: Let $c=\lceil\log(1/\epsilon)\rceil$. The fingerprinting matrix $\textbf{F}_{OPF}$ is of size $(100K^{2}c=)N\times M$. Let $\varrho_{i}$ be independent, identically distributed random variable from $[1/(300K),1-1/(300K)]$ for all $i\in\{1,...,N\}$, where $\varrho_{i}=\sin^{2}(r_{i})$ and $r_{i}\sim \text{Uniform}[t,\pi/2-t]$ with $0<t<\pi/4$, $\sin^{2}(t)=1/(300K)$. \\
$\textbf{F}_{OPF}$ is constructed by selecting entry $f_{i,j}$ independently from $\{0,1\}$ with $p(f_{i,j}=1)=\varrho_{i}$ for $i\in\{1,...,N\}$.

Notice that the OPF code is similar to the previous structure, except that, the fingerprinting matrix is constructed row-wise (row $i$ with probability $\varrho_{i}$) instead of column-wise (column $j$ with probability $p_{j}$).

The detection procedure for this structure is as follows \cite{tardos}. The detector defines matrix $\textbf{U}_{N\times M}$ such that 
$$
u_{i,j} =
\begin{cases}
\sqrt{\frac{1-\varrho_{i}}{\varrho_{i}}}, & \text{if } f_{i,j}=1\\
-\sqrt{\frac{\varrho_{i}}{1-\varrho_{i}}}, & \text{if } f_{i,j}=0
\end{cases}
$$
and $Z:=20cK$. Then it accuses user $j$ whenever $(\textbf{y}^{T}\textbf{U})_{j}>Z$, where $\textbf{y}$ is the forgery.

Similar to the previous section, consider the coalition $\mathcal{K}=\{1,...,K\}$ who performs the proposed attack in Section \ref{att} on this structure. The colluders estimate the probabilities by
\begin{equation}
\widehat{\varrho}_{i}:=\frac{m_{i}}{|\mathcal{K}|}, \ i\in\mathcal{I},
\end{equation}
where $m_{i}:=|\{f_{i,j}=1:\ j\in\mathcal{K}\}|$. The attacker's estimation of $\textbf{f}_{1}$ will be 
$$
\widehat{f}_{i,1}=\begin{cases}
f_{i,1} & \text{if } i\in\mathcal{I},\\
1 & \text{otherwise}.
\end{cases}
$$
\begin{theorem}\label{tardos2}
Suppose a coalition of size $K\geq4$ perform the proposed attack on the above fingerprinting system with matrix $\textbf{F}_{N\times M}$. Then the number of rows that they fail to estimate correctly ($N-|\mathcal{I}|$) is bounded by $2CN/\sqrt{K}$ with probability at least 
$$
1-\frac{\sqrt{K}-C}{NC},
$$
where $1<C<1.3$ is a constant.
\end{theorem}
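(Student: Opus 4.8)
The strategy is to express the number of unrecovered rows $N-|\mathcal{I}|$ as a sum of i.i.d.\ Bernoulli variables and then invoke Chebyshev's inequality. First I would specialize the attack of Section~\ref{att} to the binary alphabet $\Xi=\{0,1\}$. Here the difference $0$ is produced both by $(0,0)$ and by $(1,1)$, so by (\ref{du}) we have $\mathcal{U}=\{-1,+1\}$. Since $a_{i,j}=f_{i,1}-f_{i,j}$, an element $a_{i,j}-a_{i,k}=f_{i,k}-f_{i,j}$ of $\mathcal{A}_i$ in (\ref{ai}) lies in $\mathcal{U}$ exactly when the coalition bits $\{f_{i,j}:j\in\mathcal{K}\}$ in row $i$ are not all equal, i.e.\ when $m_i\notin\{0,K\}$; in that case (\ref{es1}) recovers $f_{i,1}$ exactly. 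Writing $X_i$ for the indicator of the event $\{m_i\in\{0,K\}\}$, this gives $N-|\mathcal{I}|=\sum_{i=1}^{N}X_i$.

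Because the rows are generated independently — the $\varrho_i$ are i.i.d.\ and, conditionally on $\varrho_i$, the entries of row $i$ are independent — the $X_i$ are i.i.d.\ Bernoulli with parameter $\mu:=\mathbb{E}\big[\varrho_i^{K}+(1-\varrho_i)^{K}\big]$. Next I would observe that $\varrho_i=\sin^2(r_i)$ has a distribution symmetric about $1/2$: replacing $r_i$ by $\pi/2-r_i$ preserves the law of $r_i$ and turns $\varrho_i$ into $1-\varrho_i$. Hence $\mu=2\,\mathbb{E}[\varrho_i^{K}]$, and
$$
\mathbb{E}[\varrho_i^{K}]=\frac{1}{\pi/2-2t}\int_{t}^{\pi/2-t}\sin^{2K}(r)\,dr .
$$
I would bound this integral via the Wallis formula $\int_0^{\pi/2}\sin^{2K}(r)\,dr=\tfrac{\pi}{2}\binom{2K}{K}4^{-K}$ together with elementary two-sided estimates of the central binomial coefficient; since $\sin^2 t=1/(300K)$ and $K\geq4$ make $t$ tiny, truncating the integral to $[t,\pi/2-t]$ and the normalizer $1/(\pi/2-2t)$ change the value only negligibly. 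This yields $\mu=C/\sqrt{K}$ with $1<C<1.3$.

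Finally I would apply Chebyshev's inequality to $S:=\sum_{i=1}^{N}X_i$, which has mean $N\mu=CN/\sqrt{K}$ and variance $N\mu(1-\mu)$:
$$
\mathbb{P}\Big(S>\tfrac{2CN}{\sqrt{K}}\Big)=\mathbb{P}\big(S-N\mu>N\mu\big)\leq\frac{N\mu(1-\mu)}{(N\mu)^2}=\frac{1-\mu}{N\mu}=\frac{\sqrt{K}-C}{NC},
$$
so $N-|\mathcal{I}|\leq 2CN/\sqrt{K}$ with probability at least $1-(\sqrt{K}-C)/(NC)$, which is the claim.

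The main obstacle is the middle step: forcing the constant $C$ into the narrow window $(1,1.3)$ requires reasonably sharp control of the truncated Wallis integral and of the factor $\pi/(\pi-4t)$, and it is precisely here that the hypothesis $K\geq4$ (through $\sin^2 t=1/(300K)$) is consumed. The combinatorial identification of $\mathcal{I}$ in the binary case is straightforward once $\mathcal{U}=\{-1,+1\}$ is noted, and the tail bound is then a one-line application of Chebyshev, so no further difficulty is expected there.
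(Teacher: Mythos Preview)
Your proposal is correct and follows essentially the same route as the paper: write $N-|\mathcal{I}|$ as a sum of row-failure indicators, compute the mean via the integral $\int\sin^{2K}r\,dr$ and the central-binomial/Wallis asymptotic to get $\mu\simeq C/\sqrt{K}$, bound the variance by $N\mu(1-\mu)$, and finish with a one-sided Chebyshev step. The only cosmetic differences are that the paper bounds the integral through the reduction formula $\int\sin^{n}=\tfrac{n-1}{n}\int\sin^{n-2}-\cdots$ (which is exactly Wallis) and treats the $\sin^{2K}$ and $\cos^{2K}$ pieces separately rather than invoking the $\varrho\leftrightarrow 1-\varrho$ symmetry you use; also, the paper calls the final tail estimate ``Chernoff's bound,'' but the inequality actually applied---and the one that produces $(\sqrt{K}-C)/(NC)$---is Chebyshev, as you identified.
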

\begin{proof}
Proof is in Appendix.
\end{proof}

Colluders use Algorithm \ref{alg} to forge $\textbf{y}$ and hide their tracks as follows: They estimate the signal $\hat{\textbf{s}}$ and its Fourier transform $\mathcal{F}[\widehat{\textbf{s}}]$. Then add  Gaussian and uniform noises to the magnitude and the  phase, respectively. They add  a random vector $c\textbf{w}$ in the time domain to form the final forgery. Vector $\textbf{w}$ is constructed by setting the elements at the indices that the attackers can detect (i.e., $\mathcal{I}$) to zero while choosing a random element from $\{-1,0,1\}$ for the other elements.

Note that $\mathcal{F}[\cdot]$ in this algorithm denotes the Fourier transform and $\measuredangle z$ denotes the phase of a complex number $z$. 
\begin{algorithm}
  \caption{}
  \label{alg}
    \begin{algorithmic}[1]
    \STATE $Input:\  \ \mathcal{I},\textbf{q}_{1}, \widehat{\textbf{f}}_{1}, (\sigma^{2}_{0},c_1,c_2)>0$.
    \STATE $Output:\ \ \textbf{y}$.
    \FOR {$i=1,...,N$}
     \STATE {\textbf{If} $i\in\mathcal{I}$; $w_{i}\longleftarrow 0$.}
    \STATE {\textbf{If} $i\notin\mathcal{I}$; $w_{i}\longleftarrow \{0,1,-1\} \ w.p.\  \{1/2,1/4,1/4\}$.}    
    \ENDFOR
    \STATE{$\widehat{\textbf{s}}\longleftarrow\textbf{q}_{1}-\widehat{\textbf{f}}_{1}$}
    \STATE{$|\mathcal{F}[\widehat{\textbf{s}}]|\longleftarrow|\mathcal{F}[\widehat{\textbf{s}}]|+a_{1}$; \ $a_{1}\sim\mathcal{N}(0,\sigma_{0}^{2})$.}\\
\STATE{$\measuredangle\mathcal{F}[\widehat{\textbf{s}}]\longleftarrow\measuredangle\mathcal{F}[\widehat{\textbf{s}}]+a_{2}$;\  $a_{2}\sim$ Uniform$[0,\pi/(2Kc_1)]$.}
    \STATE{$\textbf{y}\longleftarrow\mathcal{F}^{-1}[\mathcal{F}[\widehat{\textbf{s}}]]+c_2\textbf{w}$.}
      \end{algorithmic}
\end{algorithm} 

Section \ref{simopf} demonstrates the performance of this attack on a OPF code.

\subsection{Column-Wise Generated Fingerprints }\label{CWC}

Each entry of the fingerprinting matrix $\textbf{F}_{N\times M}$ is chosen independently from $\{0,1\}$ with $p(f_{i,j}=1)=p_{j}, j=1,...,M$, where $p_{j}=\sin^{2}(r_{j})$ is selected by picking uniformly at random the value $r_{j}\in [t,\pi/2-t]$ for some fixed $0<t<\pi/4$, $\sin^{2}(t)=a>0$. Note that in this set up, since the alphabet set is $\{0,1\}$, we have $\mathcal{U}=\{-1,1\}$. 

Suppose a coalition w.l.o.g. $\mathcal{K}=\{1,...,K\}$ performs the proposed attack in section \ref{att} on this structure. Let $\mathcal{I}\subseteq\{1,...,N\}$ be the set of all indices that the colluders could estimate their fingerprints correctly. For this structure, $\mathcal{I}$ is the index set of those rows in $\textbf{A}:=[\textbf{q}_{1}-\textbf{q}_{1},...,\textbf{q}_{1}-\textbf{q}_{K}]$ that have at least one non-zero entry. \\
Next step, the colluders estimate their corresponding $p_{j}$s using empirical estimation over rows with index set $\mathcal{I}$, namely
\begin{equation}
\widehat{p}_{j}:=\dfrac{n_{j}}{|\mathcal{I}|}, \ \ j\in\mathcal{K},
\end{equation} 
where $n_{j}=|\{f_{i,j}=1: i\in\mathcal{I}\}|$. Theorem \ref{ltardos} describes how accurate this estimation will be.
Finally, the attackers use the above empirical distributions to decide about those rows with indices belonging to $\{1,...,N\}\setminus \mathcal{I}$. To do so, they compute 
$$
p_{tot}:=\frac{1}{K}\sum_{j\in \mathcal{K}}\widehat{p}_{j}.
$$
Then for $ i\in\{1,...,N\}\setminus \mathcal{I}$ and $ j\in\mathcal{K}$,
$$
\widehat{f}_{i,j} =
\begin{cases}
1, & \text{if }p_{tot}>0.5+\tau\\
0, & \text{if }p_{tot}<0.5-\tau
\end{cases}
$$
 In case of $0.5-\tau\leq p_{tot}\leq0.5+\tau$, they pick $n_{*}$ indices out of $\{1,...,N\}\setminus \mathcal{I}$ randomly and set them to be one and the rest to be zero, where $n_{*}$ is chosen such that the empirical distributions do not vary, namely
 $$
 \left|\min_{j}\widehat{p}_{j}-\frac{n_{*}+|\mathcal{I}|\min_{j}\widehat{p}_{j}}{N}\right|= \left|\max_{j}\widehat{p}_{j}-\frac{n_{*}+|\mathcal{I}|\max_{j}\widehat{p}_{j}}{N}\right|.
 $$
Equivalently,
$$
n_{*}:=\left\lfloor\frac{\min_{j}\widehat{p}_{j}+\max_{j}\widehat{p}_{j}}{2}(N-|\mathcal{I}|)\right\rfloor.
$$
The final forgery in this attack will be 
$$
\textbf{y}=\textbf{q}_{1}-\widehat{\textbf{f}}_{1}.
$$
\begin{theorem}\label{ltardos}
Consider a probabilistic fingerprinting code $\textbf{F}$ of size $N\times M$ as it is described above and let $\mathcal{K}$ be a coalition of size $K$. 
Then
\begin{itemize}
\item the number of rows that they fail to estimate correctly ($N-|\mathcal{I}|$) is bounded by $N/K$ with probability at least 
$$
1-12K^2\left(\dfrac{3}{8}\right)^K-\dfrac{8K^2}{N2^K}, \ K>6.
$$
\item $\mathbb{E}[||\textbf{f}_{1}-\widehat{\textbf{f}}_{1}||^{2}]\leq N/2^{K-1}$.
\item with probability at least $1-\dfrac{2K}{N^{(2-1/2^{K-2})}}$, we have
$$
\max_{j\in\mathcal{K}}|p_{j}-\widehat{p}_{j}|<\sqrt{\frac{\log N}{N}}.
$$

\end{itemize}
\end{theorem}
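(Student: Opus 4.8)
The unifying first step is to condition on the column parameters $\textbf{p}=(p_1,\dots,p_K)$, i.i.d.\ with $p_j=\sin^2 r_j$. Given $\textbf{p}$ the $N$ rows of the colluders' sub-matrix are i.i.d., and row $i$ is \emph{lost} (i.e.\ $i\notin\mathcal I$) precisely when $f_{i,1}=\cdots=f_{i,K}$: if some pair of colluders disagrees in row $i$, the corresponding entry of $\textbf{A}$ equals $\pm1\in\mathcal U=\{-1,1\}$ and $f_{i,1}$ is recovered exactly. Hence, conditionally on $\textbf{p}$, $N-|\mathcal I|\sim\mathrm{Binomial}(N,\beta)$ with $\beta:=\prod_{j=1}^K p_j+\prod_{j=1}^K(1-p_j)$. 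Two moment facts drive everything: since $\sin^2$ is symmetric about $\pi/4$ and $r_j$ is uniform on the symmetric interval $[t,\pi/2-t]$, $\mathbb E[p_j]=\tfrac12$, so $\mathbb E[\beta]=2^{-(K-1)}$; and $\mathbb E[p_j^2]=\mathbb E[\sin^4 r_j]\le\tfrac38$ (the average of $\sin^4$ over all of $[0,\pi/2]$, which truncation only lowers), so $\mathbb E[\beta^2]\le 4(3/8)^K$, and keeping the $\prod p_j\prod(1-p_j)$ cross term sharpens $4$ to a constant below $3$, which yields the stated $12K^2$. The second bullet is then immediate: $\widehat f_{i,1}=f_{i,1}$ for $i\in\mathcal I$, while for $i\notin\mathcal I$ the squared error is at most $1$ since both bits lie in $\{0,1\}$; hence $\|\textbf f_1-\widehat{\textbf f}_1\|^2\le N-|\mathcal I|$ pointwise and $\mathbb E\|\textbf f_1-\widehat{\textbf f}_1\|^2\le N\,\mathbb E[\beta]=N/2^{K-1}$.

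For the first bullet, I would split on the event $\{\beta\le\tfrac1{2K}\}$. Markov applied to $\beta^2$ gives $P(\beta>\tfrac1{2K})\le(2K)^2\mathbb E[\beta^2]\le 12K^2(3/8)^K$. On the complement, $\mathbb E[N-|\mathcal I|\mid\textbf p]=N\beta\le N/(2K)$, so Chebyshev (using $N/K-N\beta\ge N/(2K)$) gives $P(N-|\mathcal I|>N/K\mid\textbf p)\le N\beta/(N/2K)^2=4K^2\beta/N$; averaging over $\textbf p$, where the indicator $\{\beta\le\tfrac1{2K}\}$ only removes mass, yields $\le(4K^2/N)\mathbb E[\beta]=8K^2/(N2^K)$. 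Summing the two bounds gives the claim, and $K>6$ is exactly what keeps $12K^2(3/8)^K<1$.

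For the third bullet, fix a column $j$, let $G_j:=|\{i:f_{i,j}=1\}|$ and let $m_0,m_1$ count the all-zero / all-one rows among the colluders, so that $n_j=G_j-m_1$, $|\mathcal I|=N-m_0-m_1$, and a short computation gives $|\widehat p_j-p_j|\le\frac{N}{|\mathcal I|}\big|\frac{G_j}{N}-p_j\big|+\frac{N-|\mathcal I|}{|\mathcal I|}$. Conditionally on $\textbf p$, $G_j\sim\mathrm{Binomial}(N,p_j)$, so Hoeffding gives $P(|G_j/N-p_j|\ge\delta)\le 2e^{-2N\delta^2}$. I would take $\delta$ slightly below $\sqrt{(\log N)/N}$, reserving a gap of order $2^{-K}\sqrt{(\log N)/N}$ to absorb the factor $N/|\mathcal I|$ and the term $(N-|\mathcal I|)/|\mathcal I|$ — both small because $N-|\mathcal I|$ is $\mathrm{Binomial}(N,\beta)$ with $\beta$ typically of order $2^{-K}$ or smaller — so that the exponent $2N\delta^2$ becomes $(2-1/2^{K-2})\log N$, and a union bound over the $K$ columns then gives probability $1-2K/N^{2-1/2^{K-2}}$.

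The routine parts are the second and first bullets. The crux is the third: one must split the error budget between the Binomial fluctuation of $G_j/N$ and the contribution of the guessed rows so that the Chernoff exponent comes out to exactly $2-1/2^{K-2}$, and this needs a sharper — and mildly regime-dependent — control of $N-|\mathcal I|$, expressed through $2^{-K}$ rather than the $1/K$ that sufficed for the first bullet.
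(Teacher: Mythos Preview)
Your treatment of the first two bullets matches the paper's in spirit and in outcome. The only cosmetic difference is in the first bullet: the paper computes $\mathbb{E}[S]$ and $\mathbb{E}[S^2]$ unconditionally (via the tower property over $\textbf{p}$, using $\mathbb{E}[p_1]=\tfrac12$ and $\mathbb{E}[p_1^2]\le\tfrac38$) and then applies Chebyshev once to $S$, whereas you split on $\{\beta\le 1/(2K)\}$ and combine Markov on $\beta^2$ with a conditional Chebyshev. These are the two halves of the same variance decomposition and yield the identical constants $12K^2(3/8)^K$ and $8K^2/(N2^K)$.

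For the third bullet the paper does something much blunter than you do: it simply writes
\[
P\Big(\max_{j\in\mathcal K}|p_j-\widehat p_j|\ge\rho\Big)\le 2K\,e^{-2(N-S)\rho^2}
\]
by Hoeffding over the $N-S$ recovered rows, treating $\widehat p_j$ as an average of $N-S$ i.i.d.\ $[0,1]$-valued samples, and then substitutes $\rho=\sqrt{(\log N)/N}$ together with $S=N/2^{K-1}$ to obtain the exponent $2-1/2^{K-2}$. There is no decomposition through the full column sum $G_j$, no splitting of the error budget, and no separate control of an additive correction.

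Your route through $G_j$ is more scrupulous, but the place you flag as ``the crux'' is in fact a genuine gap in your argument. The term $(N-|\mathcal I|)/|\mathcal I|$ that appears in your decomposition is of order $\beta\approx 2^{-(K-1)}$, which is a constant in $N$; it is \emph{not} of order $2^{-K}\sqrt{(\log N)/N}$ as you assert. For fixed $K$ and $N$ large, this term by itself already exceeds $\sqrt{(\log N)/N}$, so no choice of $\delta$ can absorb it and still leave the Hoeffding exponent at $2-1/2^{K-2}$. In short, the obstacle you sensed is real, and your proposal does not resolve it; the paper avoids it by applying Hoeffding directly on the $|\mathcal I|$ recovered samples rather than on all $N$ of them.
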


\begin{proof}
Proof is in Appendix.
\end{proof}

\section{Attack on random Gaussian structure}\label{contin}
In this section, we generalize the proposed attack for the case that the alphabet set is continuous. 

Suppose the entries of the fingerprinting matrix $\textbf{F}_{N\times M}$ are chosen i.i.d. at random from a normal distribution $\mathcal{N}(0,\sigma^{2})$, such that (\ref{energy}) holds i.e. $\sigma^{2}=1/N$. The main idea here is to convert this fingerprinting structure into a finite alphabet structure by quantizing the support set of the fingerprints and then apply the attack strategy of Section \ref{att}.

Using the Chebyshev's inequality, the attackers can assume that there exists a $\xi>0$ such that the fingerprinting entries, $\{f_{k,m}\}$s are chosen only from a bounded interval $[-\xi,\xi]$ with high probability. This is a good assumption since for any $\epsilon>0$, there exists a $\xi>0$, such that 
\begin{equation}\nonumber
p\left(|f_{i,j}|\geq\xi\right)\leq \epsilon.
\end{equation}
 Furthermore, the attackers discretize the values in this interval into $2w, w\in \mathbb{N}$ discrete values $\{\widehat{\xi}_{i}\}_{i=1}^{2w}$ as follows:  they assign $\widehat{\xi}_{i}=-\xi+\frac{\xi}{2w}(2i-1)$ to all the values that belong to interval $[-\xi+\frac{\xi}{w}(i-1),-\xi+\frac{\xi}{w}i)$ for some $i\in\{1,...,2w\}$ and use  $\{\widehat{\xi}_{i}\}_{i=1}^{2w}$ as their alphabet set. Their next step is to represent the entries of the matrix $\textbf{A}$ introduced in Section \ref{att} (whose $i$th column is given by $\textbf{q}_{1}-\textbf{q}_{i}$) using the new alphabet set i.e., $\{\widehat{\xi}_{i}\}_{i=1}^{2w}$. This can be done using the fact that each entry of the matrix $\textbf{A}$ will fall into $[\frac{\xi}{w}(i-j-1),\frac{\xi}{w}(i-j+1))$, for some $i,j\in\{1,...,2w\}$ with probability at least $(1-\epsilon)^{2}$. The attackers estimate $\textbf{A}=[a_{k,m}]$ by $\widehat{\textbf{A}}_\alpha=[\widehat{a}_{k,m}]$, where 
\begin{equation}\label{quan}
\widehat{a}_{k,m}=
\begin{cases}
-\frac{\xi}{w}(2w-1), & \text{if } a_{k,m}<-2\xi\vspace{1.5mm}\\
\frac{\xi}{w}(i-j), & \text{if } a_{k,m}\in \small{[\frac{\xi}{w}(i-j-1),\frac{\xi}{w}(i-j))} \\
 &\ \ \ \ \ for \ \small{ i<j}\\
 -\frac{\xi}{w}, & \text{if } a_{k,m}\in \small{[-\frac{\xi}{w},-\alpha\frac{\xi}{2w})}\vspace{1.5mm} \\
0, & \text{if }|a_{k,m}|\leq \small{\alpha\frac{\xi}{2w}}\vspace{1.6mm}\\
 \frac{\xi}{w}, & \text{if } a_{k,m}\in \small{(\alpha\frac{\xi}{2w},\frac{\xi}{w}]}\vspace{1.5mm} \\
\frac{\xi}{w}(i-j), & \text{if } a_{k,m}\in \small{(\frac{\xi}{w}(i-j),\frac{\xi}{w}(i-j+1)]}\\
 &\ \ \ \ \ for \ \small{i>j}\\
\frac{\xi}{w}(2w-1), & \text{if } a_{k,m}>2\xi
\end{cases}
\end{equation}
for some $i,j\in\{1,...,2w\}$ and $0<\alpha\leq2$. Next we provide an example for the case when $\xi=2$ and $w=2$ to motivate our quantization choices.

\begin{example}
\textit{Let $\xi=2$ and $w=2$. The attacker's alphabet will be $\{-3/2,-1/2,1/2,3/2\}$. If $a_{k,m}:=f_{k,1}-f_{k,m}\in [-4,-3)$, then with probability at least $(1-\epsilon)^{2}$, we have $-2<f_{k,1}<-1$ and $1<f_{k,m}<2$. In other word, $\hat{f}_{k,1}=-3/2$ and $\hat{f}_{k,m}=1/2$, equivalently, $\widehat{a}_{k,m}=-3/2-1/2=-3$. Figure \ref{quantize} illustrates this quantization.\\
Now, suppose $a_{k,m}$ belongs to $(-3,-2)$. In this case, there are three possible events: $E_{1}:=\{-2<f_{k,1}<-1, 1<f_{k,m}<2\}$, $E_{2}:=\{-2<f_{k,1}<-1, 0<f_{k,m}<1\}$, or $E_{3}:=\{-1<f_{k,1}<0, 1<f_{k,m}<2\}$. If $E_{1}$ is the true event, then $\widehat{a}_{k,m}=-3/2-1/2=-3$. However, if either $E_{2}$ or $E_{3}$ is the true event, then $\widehat{a}_{k,m}=-1/2-3/2=-2$. Since $p(E_{2}\cup E_{3})>p(E_{1})$, we encode $\widehat{a}_{k,m}=-2$.
The quantization rule given by (\ref{quan}) is obtained similarly, except for the case that $a_{k,m}\in [-1,1)$. In this case, we consider a small gap interval controlled by $\alpha$. When $\alpha=0$, $\widehat{a}_{k,m}=0$ is neglected and when $\alpha=2$, the whole interval $[-1,1]$ is quantized into $0$.
}
\begin{figure}
  \centering
    \includegraphics[width=0.4\textwidth]{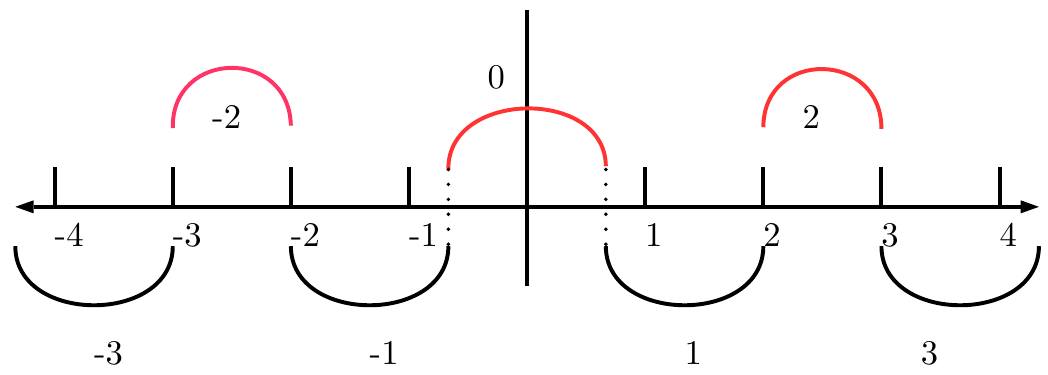}
\caption{Quantization rule for the entries of the matrix $\textbf{A}$, when $\xi=2$, $w=2$, and $\alpha=1.4$  .}\label{quantize}
\end{figure}
\end{example}

Without loss of generality, suppose the first $K$ users are involved in the attack and $\widehat{\textbf{f}}_{i}$ is the attacker's estimate of $\textbf{f}_{i}$ for $1\leq i\leq K$. Then their final forgery is
\begin{equation}\label{forgery}
\widehat{\textbf{y}}=\frac{1}{K}\sum_{i=1}^{K}(\textbf{q}_{i}-\widehat{\textbf{f}}_{i})+\textbf{w},
\end{equation} 
where $\textbf{w}$ is a noise vector with mean zero and variance $\sigma_{0}^{2}$ introduced by the attackers in order to hide their tracks. 
Note that one can easily find an upper bound on the expected estimation error of this attack as follows:
\begin{eqnarray}\nonumber
&\mathbb{E}\left[||\textbf{f}_{1}-\widehat{\textbf{f}}_{1}||^{2}\right]^{1/2}\leq\sqrt{N\max_{j} \mathbb{E}[|f_{1,j}-\widehat{f}_{1,j}|^{2}]}\leq\\ \nonumber
 &\sqrt{N}\left(\int_{0}^{\infty}\dfrac{\left(x+(2w-1)\xi/2w\right)^{2}}{\sqrt{2\pi\sigma^{2}}} e^{-\frac{x^{2}}{2\sigma^{2}}}dx\right)^{1/2}\leq\\ \nonumber
 &\sqrt{\frac{N}{2}}\left(\sigma+\frac{(2w-1)\xi}{2w}\right).
\end{eqnarray}
Since $\sigma=1/\sqrt{N}$, by letting $\xi=O(1/\sqrt{N})$, the above error is bounded as $N$ grows. Later in Section \ref{simulation}, we study the performance of this attack when $w=2$ and compare its performance with the uniform linear averaging attack which was proved to outperform order-statistic collusion attacks on random Gaussian fingerprints when the colluders are subject to a mean-squared distortion constraint \cite{nn3}.
The simulation result demonstrates the excellent performance of the proposed algorithm even in continuous case.

\section{Simulation Results}\label{simulation}
Herein, we simulate the performance of our attack on both random and deterministic structures.

\subsection{Random Ternary Fingerprints}
We attacked two RTF systems with probability vectors $\textbf{p}_{1}=(2/3,1/6)$ and $\textbf{p}_{2}=(1/3,1/3)$ each for three different dimensions $(N,M)\in\{(729,2016),(2187,8128),(8128,16384)\}$. Figure \ref{simu2} depicts the probability of failing to estimate the signal  $\textbf{s}$ correctly as a function of the number of colluders for the random ternary structures. The failure event is defined as estimating at least $1\%$ of the host signal incorrectly. As Figures \ref{simu2} illustrates that the number of colluders necessary to  estimate the host signal in RTF system is much less than what Lemma \ref{lemma22} suggests. This confirms the effectiveness of our proposed attack in practice beyond the theoretical limit.

\begin{figure}
\hspace{-3mm}
\includegraphics[width=95mm, height=48mm]{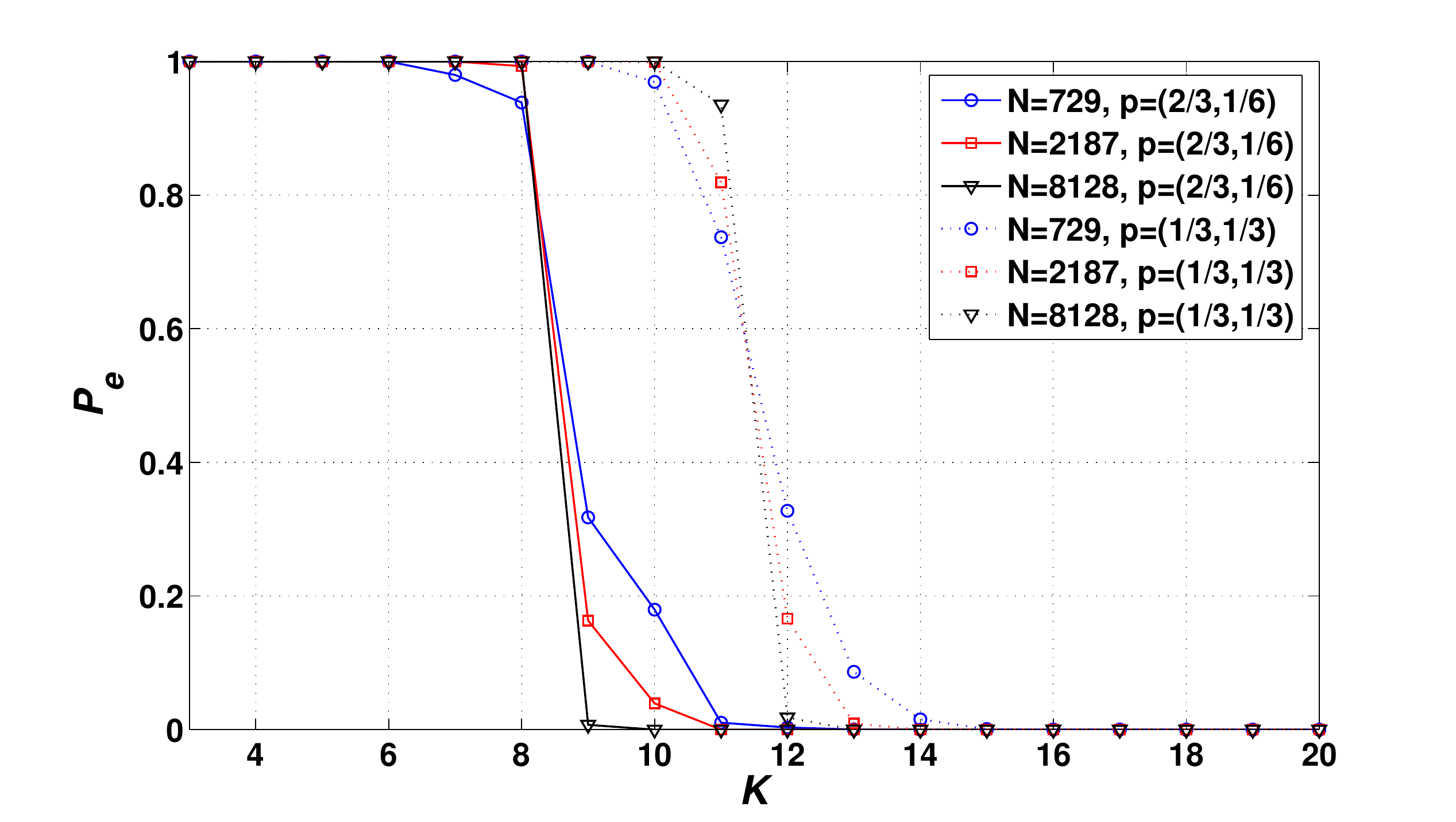}
\caption{Probability of estimating at least $1\%$ of the host signal incorrectly as a function of the collusion size $K$ for RTF structure.}\label{simu2}
\end{figure}

\subsection{ETF Fingerprints}
We attacked two ETF systems with matrices  $\textbf{F}_{1}=\textbf{F}_{ETF}(2,7,91,16)$ and $\textbf{F}_{2}=\textbf{F}_{ETF}(2,2,2^7,128)$, respectively.  Figure \ref{simu3} depicts the probability of failure event for these two ETF structures, when the failure event is defined as at least estimating one coordinate of the host signal $\textbf{s}$ incorrectly. Again the number of users that succeed at estimating the host signal is much less than what the bounds in Theorem \ref{attacktheorem} suggests.
\begin{figure}
\hspace{-4mm}
\includegraphics[width=95mm, height=48mm]{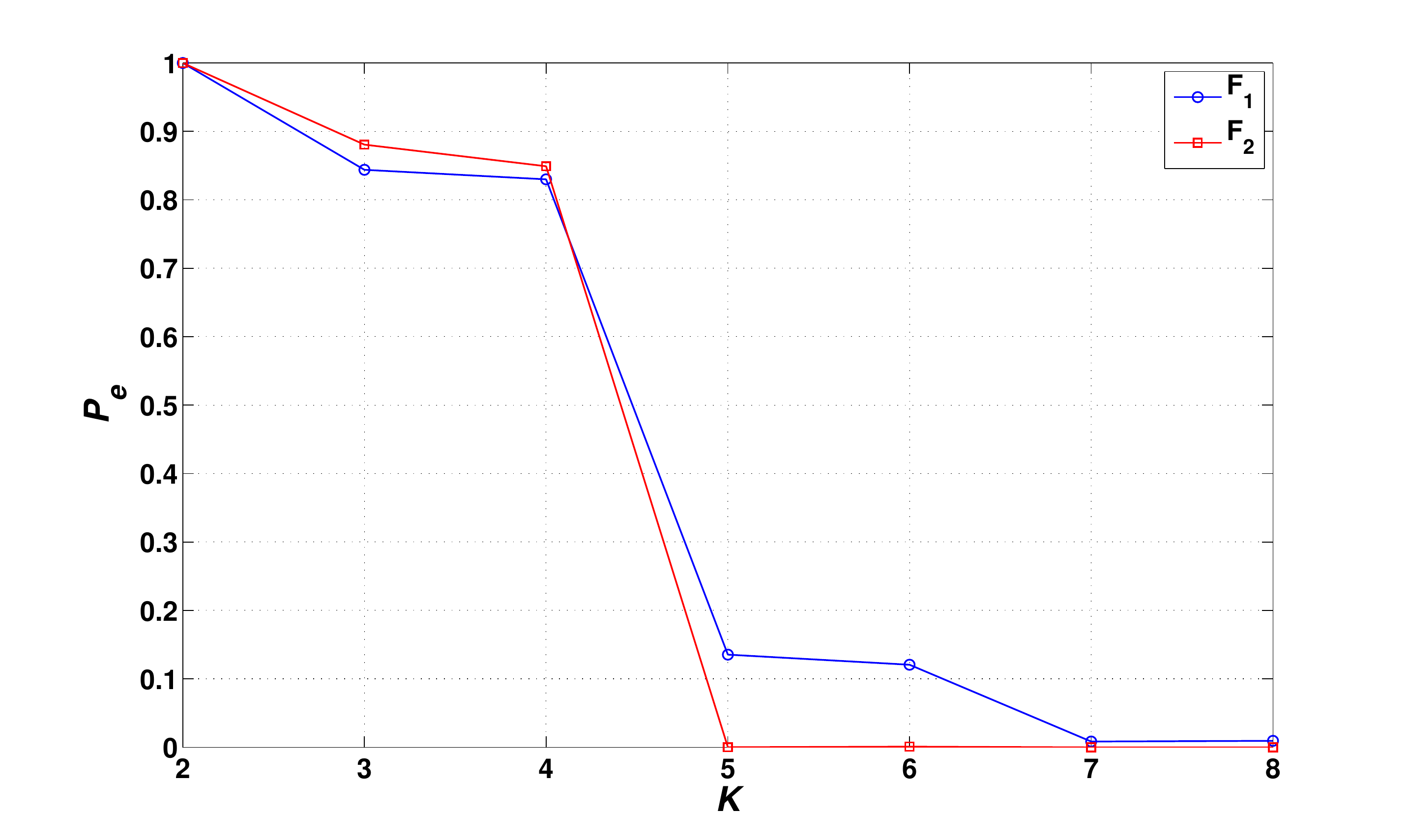}
\caption{Probability of $||\textbf{s}-\hat{\textbf{s}}||>0$ as a function of collusion size $K$ for ETF structure.}\label{simu3}
\end{figure}

\subsection{Gaussian Fingerprints}
We also attacked two random Gaussian fingerprinting systems with dimensions $(N,M)\in\{(100,250), (1000,2070)\}$. The attackers parameters were set at $(\xi,w,\alpha)=(0.12,2,1)$ and the forgery was done according to (\ref{forgery}).   Figure \ref{er1} depicts $||\textbf{f}_{i_{1}}-\widehat{\textbf{f}}_{i_{1}}||$, the estimation error of  the attackers $i$,  as a function of the collusion size $K$. 

We compared the performance of our attack against the uniform linear averaging attack, the optimal attack in the class of order-statistic collusion attacks on random Gaussian fingerprints when the colluders are subject to a mean-squared distortion constraint \cite{nn3}. The colluders in that attack create their forgery by uniformly averaging their copies
and adding an i.i.d. Gaussian noise. We evaluated performance of both attacks for a focused fingerprint detector, which performs a binary hypothesis test for each user to decide whether that particular user is guilty \cite{ETF}.

Figures \ref{comparing15} and \ref{comparing2} show the probability of detecting at least one colluder 
as a function of the number of colluders $K$ using the focused detector.

\begin{figure}[h]
\hspace{-2mm}
\includegraphics[width=95mm, height=48mm]{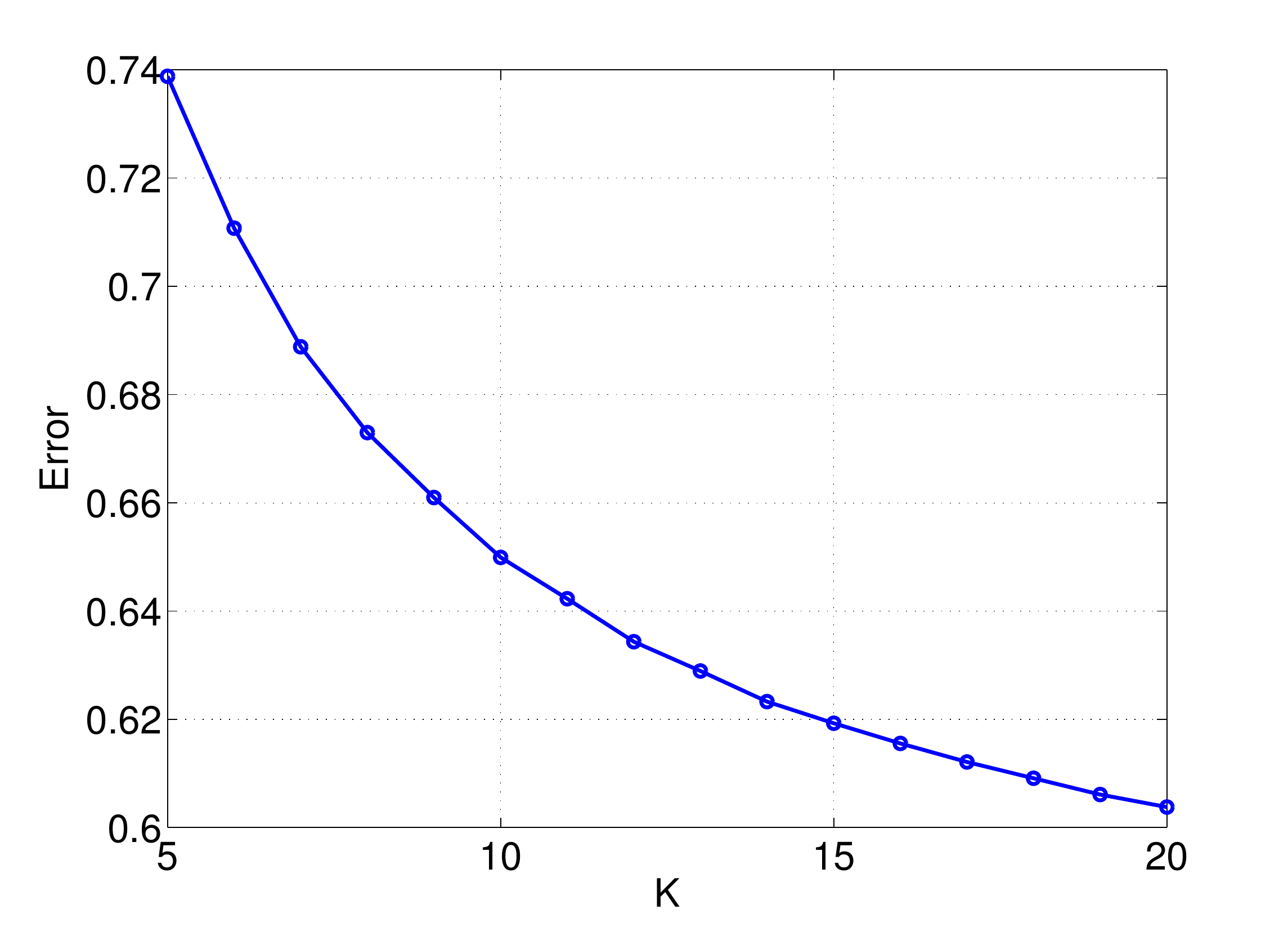}
\caption{Estimation error  $||\textbf{f}_{i_{1}}-\widehat{\textbf{f}}_{i_{1}}||$ for random Gaussian fingerprints ($(N,M)=(1000,2070)$ ).}\label{er1}
\end{figure}

\begin{figure}[h]
\hspace{-2mm}
\includegraphics[width=95mm, height=48mm]{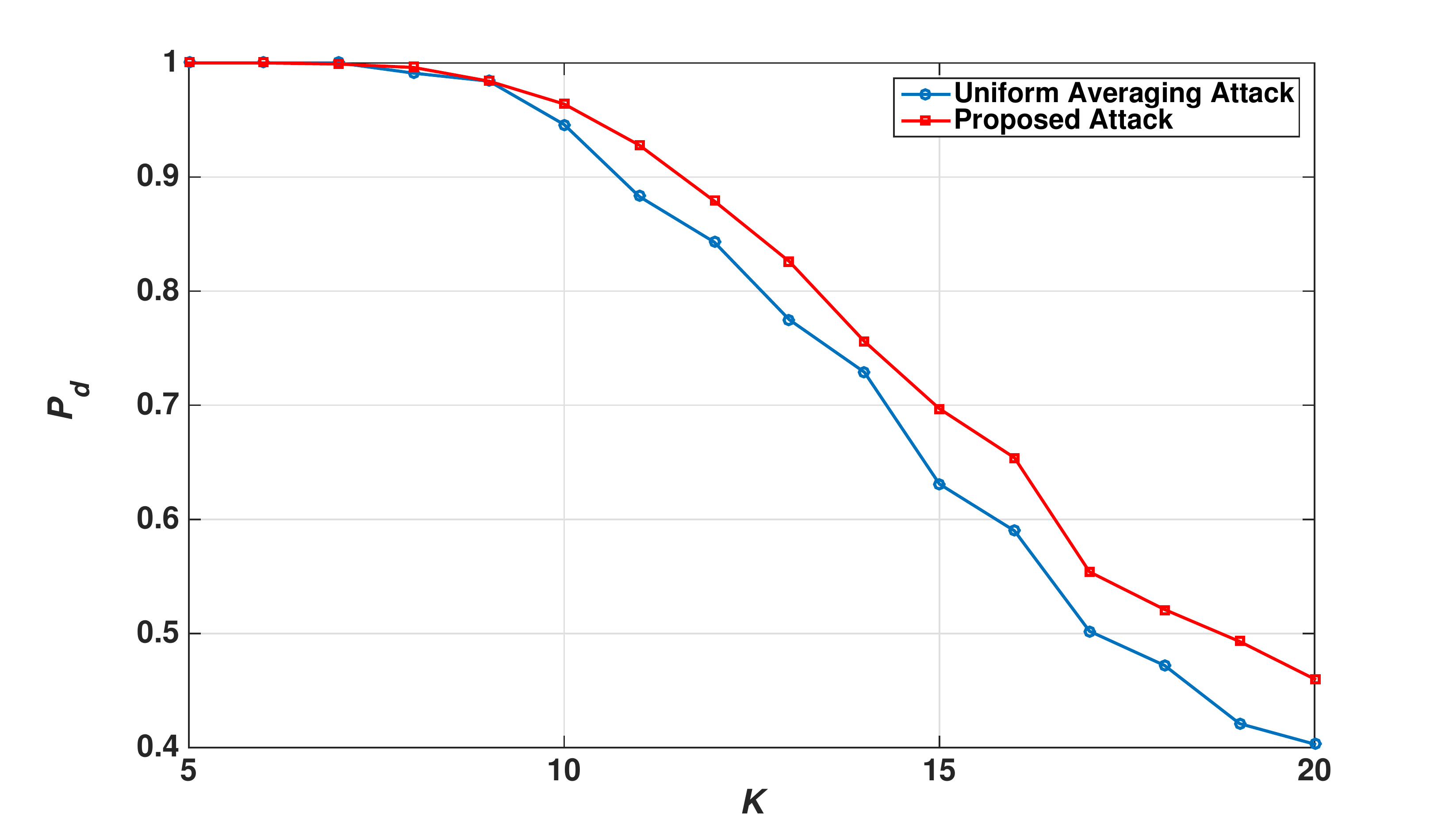}
\caption{A comparison of probability of detecting at least one colluder for our proposed attack vs. uniform linear averaging attack as a function of number of colluders for random Gaussian fingerprints ($(N,M)=(1000,2070)$). }\label{comparing15}
\end{figure}

\begin{figure}[h]
\hspace{-2mm}
\includegraphics[width=95mm, height=48mm]{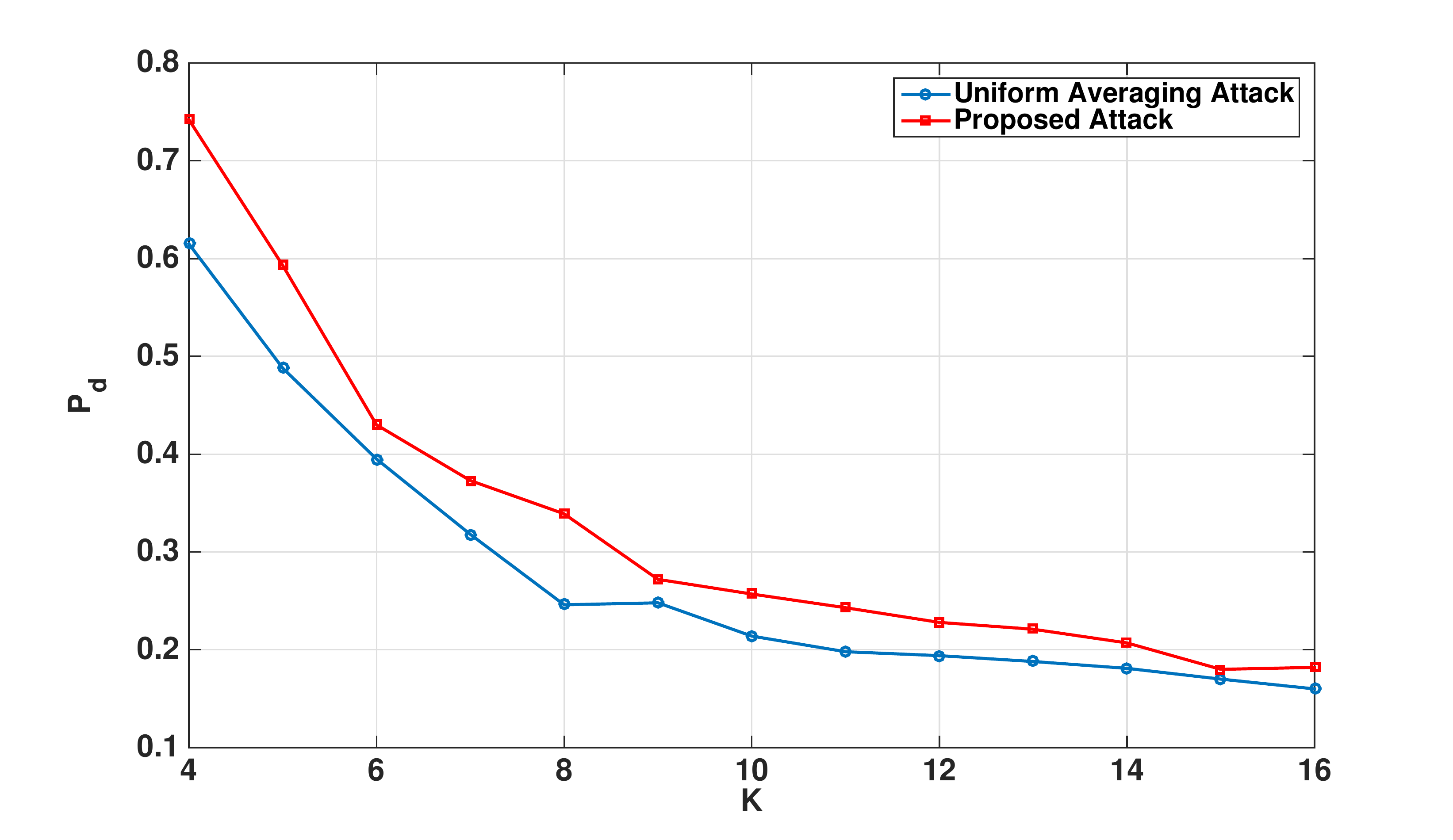}
\caption{A comparison of probability of detecting at least one colluder for our proposed attack vs. uniform linear averaging attack as a function of number of colluders for random Gaussian fingerprints ($(N,M)=(100,250)$).}\label{comparing2}
\end{figure}

\subsection{Tardos-like (Column-Wise Random) fingerprints}
The performance of the proposed attack against the fingerprinting code proposed in Section \ref{CWC} is studied in this section. We generated a fingerprinting code with parameters $(N,M,t')=(5600,2100,\pi/1000)$. $K\in\{4,...,11\}$ number of colluders were selected randomly $500$ times to arrange the attack against this code. 

$K\in \{4,...,11\}$ number of colluders were selected randomly 500 times to carryout the attack. Figure \ref{cwc1} demonstrates the worst estimation error rate incurred by an attacker, i.e., $\max_{j\in\mathcal{K}}||\textbf{f}_{j}-\widehat{\textbf{f}}_{j}||^{2}/N$ along with the theoretical bound introduced in Theorem \ref{ltardos} as a function of number of colluders. The plot suggests that in practice the performance of our attack is considerably better than predicted by the theoretical  bound. 

Additionally, we evaluated the performance of our attack against the focused detector which aims to identify at least one of the colluders. We selected $K
$ attackers and performed the attack 500 times at random. Figure \ref{cwc3} illustrates the probability of catching at least one colluder as a function of number of colluders. In the figure, probability of catching $P_c$ is calculated as
\begin{equation}\label{pd}
P_c:=\frac{\# \text{trials that at least one colluders is detected}}{\# \text{total trials}}.
\end{equation}

\begin{figure}[h]
\hspace{-2mm}
\includegraphics[width=95mm, height=48mm]{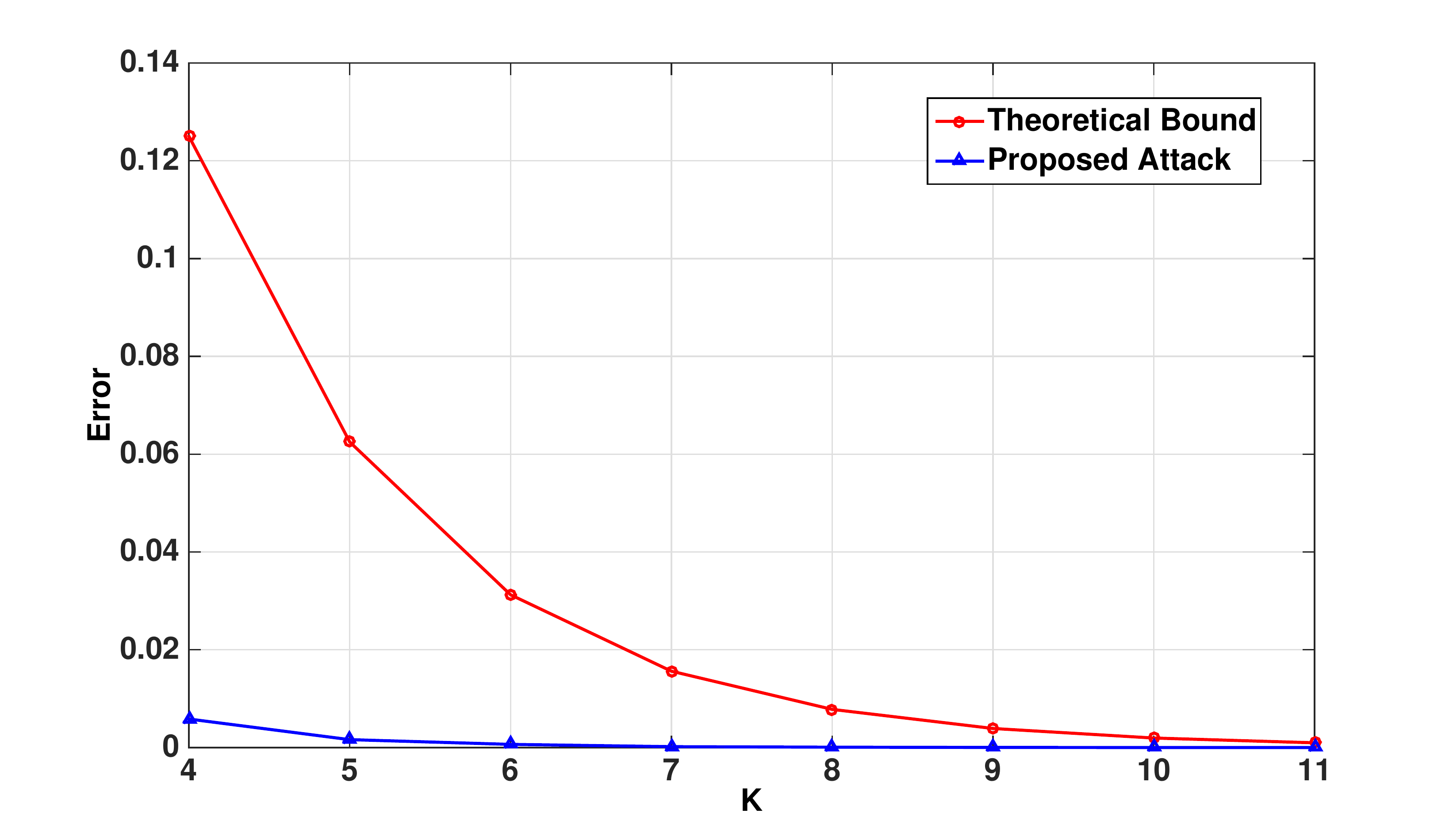}
\caption{Worst estimation error rate of the fingerprints and its corresponding theoretical bound for the CWC fingerprinting structure. }\label{cwc1}
\end{figure}

\begin{figure}[h]
\hspace{-2mm}
\includegraphics[width=95mm, height=48mm]{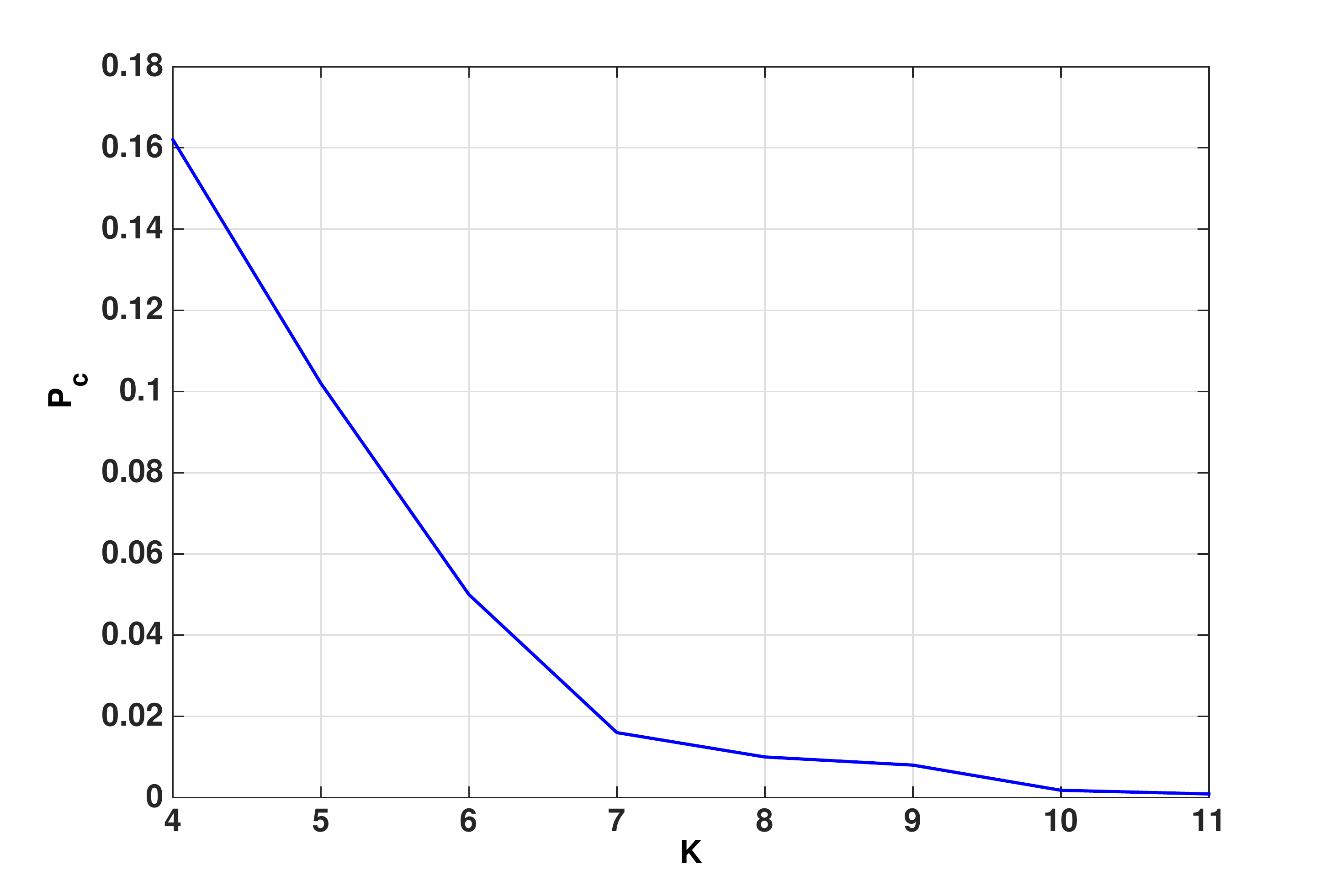}
\caption{A plot of the probability of catching at least one colluder $P_{c}$ as a function of the number of colluders for the CWC structure.}\label{cwc3}
\end{figure}

\subsection{Row-Wise Generated Fingerprints (Tardos Code)}\label{simopf}

We simulated our proposed attack against the optimal probabilistic fingerprints (OPF), known as Tardos codes, introduced in Section \ref{tard}. We considered a length $N=7500$ OPF code designed with parameters $(k,\epsilon)=(5,0.1)$ for $M=1500$ number of users. Our first measure of success of the attack is the size of the set of fingerprint indices the colluders detected correctly, $|\mathcal{I}|$. Figure \ref{correction} demonstrates the rate $|\mathcal{I}|/N$ as a function of the
coalition size for $K=\{2,3,4,5\}$.

\begin{figure}[h]
\hspace{-5mm}
\includegraphics[width=95mm, height=48mm]{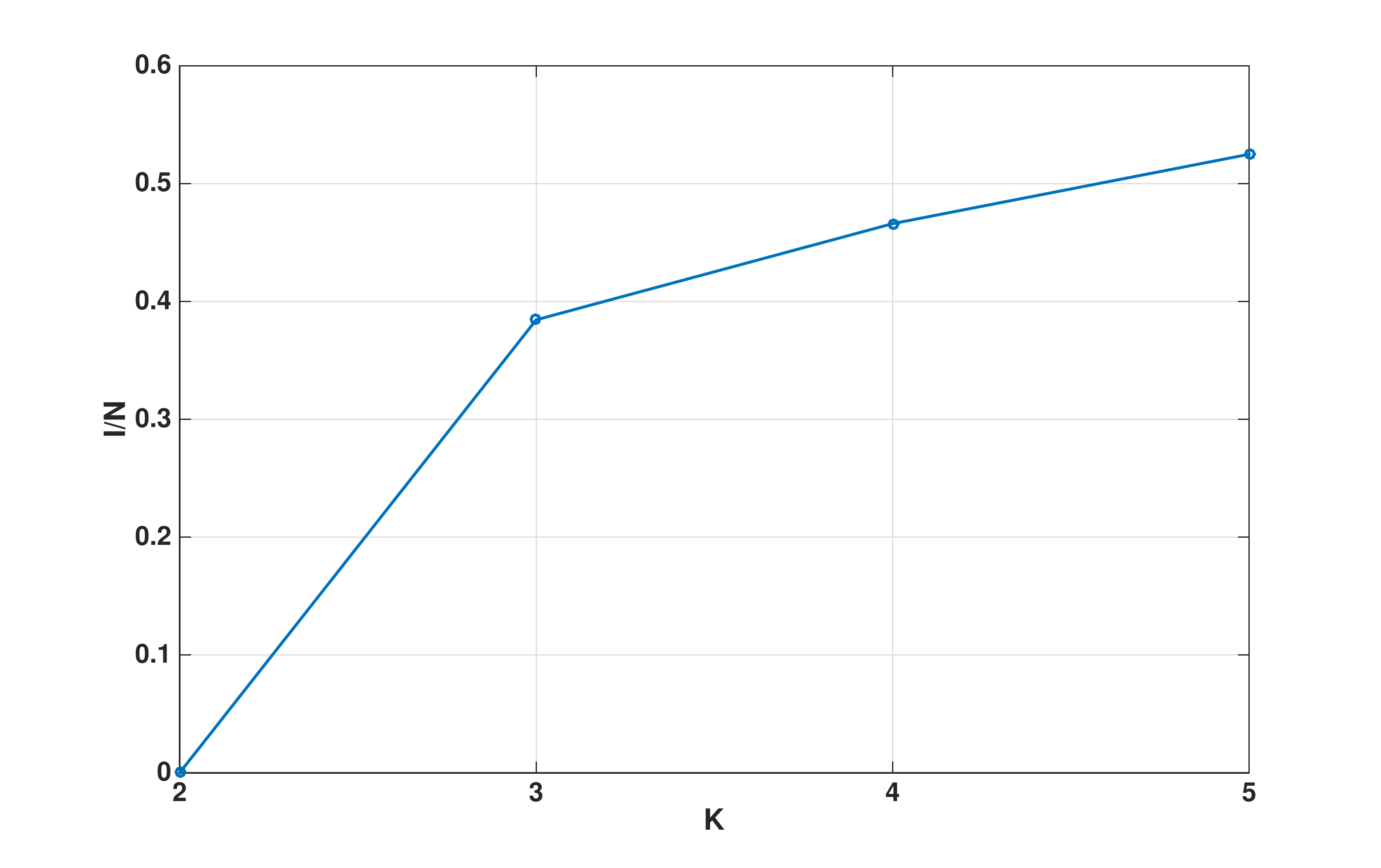}
\caption{A plot of $|\mathcal{I}|/N$ as a function of the number of colluders for the Tardos code with parameter $(k,\epsilon)=(5,0.1)$.}\label{correction}
\end{figure}

Next, we selected $K=5$ users randomly and attacked the fingerprinting scheme using  the forgery described by Algorithm \ref{alg} for $3000$ trials. We compared the performance of our attack against the uniform averaging, minority voting, and majority voting attacks \cite{minority, minority2}. 
In minority voting attack, the attackers output a symbol that occurs less often than all
the others in their copies. Similarly, in majority voting, the attackers output the most frequent symbol \cite{gaussian_aprox}. All attacks were evaluated against the optimal accusation detector of Tardos \cite{tardos}, which was described in Section \ref{tard}.
We computed the false positive (FP) probability, i.e.,  the probability of accusing at least one innocent user, as a function of the ratio of the fingerprint power to the noise power introduced by the colluders (Ratio) for all 3 attacks. More precisely, 
$$
\text{Fingerprint to Noise Power Ratio (FNPR)}:=20\log_{10}\frac{||\textbf{f}_1||}{||\textbf{y}-\textbf{s}||},
$$
where $\textbf{y}$ is the forgery. As Figure \ref{PF} illustrates, the proposed attack outperforms the others since for the same number of colluders and the same FNPR, it results in a higher false positive. 

\begin{figure}
\hspace{-5mm}
\includegraphics[width=95mm, height=50mm]{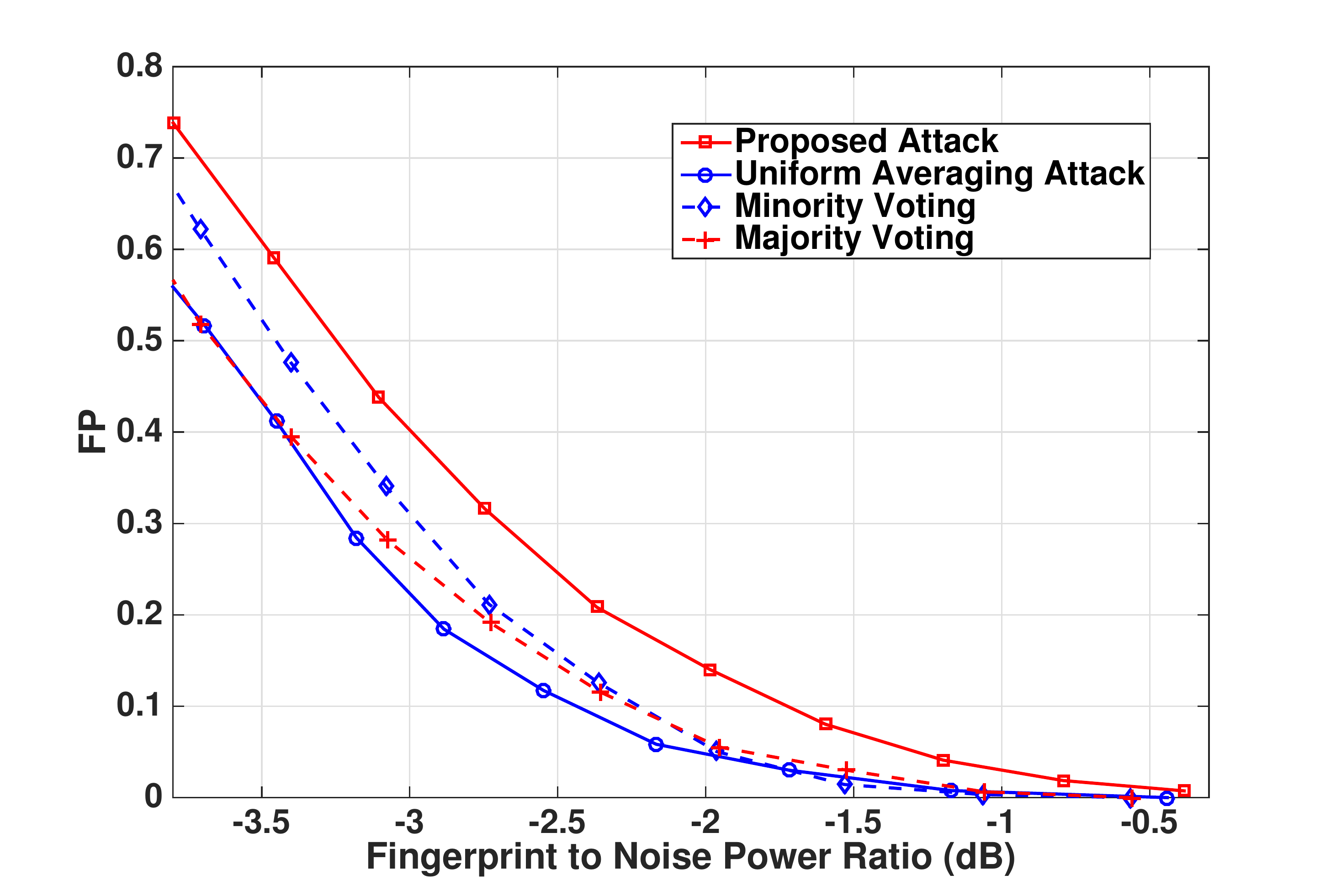}
\caption{Comparison of the fales positive (FP) probabilities of different attacks against the Tardos code with parameter $(k,\epsilon)=(5,0.1)$.}\label{PF}
\end{figure}

\section{Conclusion}\label{con}
 We develop a new attack on fingerprinting system with finite alphabet fingerprints. We derive theoretical bounds on the number of colluders that can defeat the fingerprinting system. Through simulation, we show  that a much smaller number of attackers can indeed defeat the system in practice. Our analysis suggests that the designer of a random fingerprinting system with parameters ($\Xi,\textbf{p}_{\Xi}$) should select the fingerprints from an alphabet $\Xi$ whose associated uniquely decodable set, $\mathcal{U}$, has a small cardinality. Moreover, the fingerprinter should pick a source distribution $\textbf{p}_{\Xi}$ that maximizes the error probability (\ref{Error}) subject to the energy constraint (\ref{energy}).

\section*{Acknowledgment}

\section{Appendix}\label{appen}
\textit{Proof of Theorem \ref{attacktheorem}}\\
Assume $K$ colluders $\{i_{1},...,i_{K}\}$ take part in the forgery. Let $E_{i}$ to be the event that the attackers estimate $s_{i}$ incorrectly, i.e., $E_{i}:=\{\hat{s}_{i}\neq s_{i}\}$. Since the attackers know that the fingerprint matrix is sparse, when they cannot estimate $f_{i,i_{1}}$\footnote{Note that for the attackers, estimating at least one of the fingerprints is equivalent to estimating all of their fingerprints.} uniquely  (i.e., $\mathcal{A}_{i}\cap\mathcal{U}=\emptyset$), 
the performance of the proposed attack is lower bounded by the following attack. They estimate $f_{i,i_{1}}$ with an element of $\mathcal{B}\left((f_{i,i_{1}}-f_{i,i_{1}},...,f_{i,i_{1}}-f_{i,i_{K}})\right)$ with the maximum number of zeros. The error event $E_{i}$ only occurs when vector $(f_{i,i_{1}},...,f_{i,i_{K}})$ both does not contain the pair $(-1,+1)$ and the number of its non-zero elements exceeds the number of its zeros. Since each row of $\textbf{F}_{ETF}(r,h,n,m_{0})$ has exactly $hm_{0}$ non-zero elements, and each row of the corresponding Hadamard matrix has equal number of $+1$ and $-1$,  each row of $\textbf{F}_{ETF}(r,h,n,m_{0})$ has exactly $hm_{0}/2$, $-1$s. Therefore, we can upper bound the error event as
\begin{equation}\label{pei}
p(E_{i})\leq2\sum_{i\leq \lfloor K/2\rfloor} {M-hm_{0} \choose i}{hm_{0}/2 \choose K-i}/{M \choose K}.
\end{equation}
Next we introduce an upper bound on the above error probability, first we show that for $K\geq2$ and $n>8h$,
\begin{eqnarray}\label{appe1}
\sum_{i\leq \lfloor K/2\rfloor} {M-hm_{0} \choose i}{hm_{0}/2 \choose K-i}\\ \nonumber
\leq 2{M-hm_{0} \choose \lfloor K/2\rfloor}{hm_{0}/2 \choose \lfloor (K+1)/2\rfloor}.
\end{eqnarray}
The proof is by induction on $K$. The case $K=2$ is trivial since $M>8hm_{0}$. Suppose that the inequality holds for $K<k$. In order to prove the inequality for $K=k$ we consider two cases: $k$ is an even ($k=2q$) or an odd number ($k=2q+1$). Here we only prove the even case since the proof for the odd case is similar. If $k=2q$, we need to show
\begin{eqnarray}\nonumber
\sum_{i\leq q-1} {M-hm_{0} \choose i}{hm_{0}/2 \choose 2q-i}\leq{M-hm_{0} \choose q}{hm_{0}/2 \choose q}.
\end{eqnarray}
This is true because
\begin{eqnarray}\nonumber
&\sum_{i\leq q-1} {M-hm_{0} \choose i}{hm_{0}/2 \choose 2q-i}  \\ \nonumber
&=\sum_{i\leq q-1} {M-hm_{0} \choose i}{hm_{0}/2 \choose 2q-2-i}\dfrac{\small{hm_{0}/2-2q+i+1}}{2q-i-1}\\ \nonumber
&\times\dfrac{hm_{0}/2-2q+i+2}{2q-i}  \\ \nonumber
\end{eqnarray}
\begin{eqnarray}\nonumber
&\leq\sum_{i\leq q-1} {M-hm_{0} \choose i}{hm_{0}/2 \choose 2q-2-i}\dfrac{\small{hm_{0}/2-q}}{q}\\ \nonumber
&\times\dfrac{hm_{0}/2-q+1}{q+1}  \\ \nonumber
&\leq2 {M-hm_{0} \choose q-1}{hm_{0}/2 \choose q-1}\dfrac{(hm_{0}/2-q)(hm_{0}/2-q+1)}{(q+1)q}  \\ \nonumber
&=2 {M-hm_{0} \choose q}{hm_{0}/2 \choose q}\dfrac{(hm_{0}/2-q)q}{(q+1)(M-hm_{0}-q+1)}  \\ \nonumber
&\leq {M-hm_{0} \choose q}{hm_{0}/2 \choose q}
\end{eqnarray}
The first inequality is due to the fact that $g(x):=\dfrac{a-b+x}{b-x+1}$ is an increasing function of $x$ if $a+1>0$. The second inequality is because of the induction hypothesis, and the last inequality is valid since $q/(q+1)\leq1$ and because of the assumption that $M\geq 8hm_{0}$. Substituting (\ref{appe1}) into (\ref{pei}) implies
\begin{eqnarray}\nonumber
&p(E_{i})\leq 4\dfrac{{M-hm_{0} \choose \lfloor K/2\rfloor}{hm_{0}/2 \choose \lfloor (K+1)/2\rfloor}}{{M \choose K}}\\ \nonumber
&=4{K \choose \lfloor (K+1)/2\rfloor}\prod_{j=1}^{\lfloor (K+1)/2\rfloor}\left(1-\dfrac{M-\lfloor K/2\rfloor-hm_{0}/2}{M-K+j}\right)\\ \nonumber
&\times\prod_{j=1}^{\lfloor K/2\rfloor}\left(1-\dfrac{hm_{0}}{M-\lfloor K/2\rfloor+j}\right)\\ \nonumber
&\leq 4\left(\frac{Ke}{\lfloor K/2\rfloor}\right)^{\lfloor K/2\rfloor}\left(\frac{hm_{0}/2}{M-\lfloor K/2\rfloor}\right)^{\lfloor (K+1)/2\rfloor}\left(\frac{M-hm_{0}}{M}\right)^{\lfloor K/2\rfloor}.
\end{eqnarray}
The last inequality is because ${n \choose n-k}={n \choose k}\leq (ne/k)^{k}$, where $e$ is the Euler's constant and the fact that $1-a/(b+x)$ is an increasing function of $x$ when $a>0, b>0$. By considering $K$ to be even or odd separately, and the assumption that $M>8hm_{0}$ one can show that the last equation in the above inequality can be bounded as
\begin{eqnarray}\nonumber
4\left(\frac{Ke}{\lfloor K/2\rfloor}\right)^{\lfloor K/2\rfloor}\left(\frac{hm_{0}/2}{M-\lfloor K/2\rfloor}\right)^{\lfloor (K+1)/2\rfloor}\left(\frac{M-hm_{0}}{M}\right)^{\lfloor K/2\rfloor}\\ \nonumber
\leq4\left(\frac{ehm_{0}(M-hm_{0})}{M(M-K/2)} \right)^{K/2}.
\end{eqnarray}
Recall that $M=nm_{0}$. The result follows from applying the union bound argument.
\\

\textit{Proof of Lemma \ref{errorpro}}\\
It is clear that this structure satisfies the zero mean condition. Since the elements of fingerprints are chosen uniformly, all the elements in $\mathcal{B}(\tilde{\textbf{b}})$ have the same likelihood    for every $\textbf{b}\in\Xi^{K}$. Hence, the error probability (\ref{Error}) can be written as
$$
p(E)=\frac{1}{(2w+1)^{K}}\sum_{\textbf{b}\in\Xi^{K}}(1-\frac{1}{|\mathcal{B}(\tilde{\textbf{b}})|}).
$$
It can be proven that due to the symmetric structure of $\textbf{F}$, $|\mathcal{B}(\tilde{\textbf{b}})|=t$ when the elements of $\textbf{b}$ have maximum distance $(2w+1-t)/z$, i.e., $\max_{i,j}|b_{i}-b_{j}|=(2w+1-t)/z$. Moreover, $|\mathcal{B}(\tilde{\textbf{b}})|\leq2w+1$. Using these facts we obtain
\begin{eqnarray}\nonumber
\sum_{\textbf{b}\in\Xi^{K}}1-\frac{1}{|\mathcal{B}(\tilde{\textbf{b}})|}=\sum_{t=1}^{2w+1}(1-\frac{1}{t})|\{\textbf{b}:|\mathcal{B}(\tilde{\textbf{b}})|=t\}|\\ \nonumber
=2w+\sum_{t=2}^{2w}\bigg((t-1)\bigg((2w+2-t)^{K}-2(2w+1-t)^{K}\\ \nonumber
+(2w-t)^{K}\bigg)\bigg)=(2w)^K.
\end{eqnarray}
This implies
$
p(E)=\left(1-\frac{1}{2w+1}\right)^K
$ and by applying the union bound, the result is immediate.
\\

\textit{Proof of Lemma \ref{lemma22}}\\
In this scenario, it is clear that the performance of the proposed attack is lower bounded by an attack in which the attackers choose an element in  $\mathcal{B}\left((f_{i,1}-f_{i,1},...,f_{i,1}-f_{i,K})\right)$ with maximum number of zeros whenever $\mathcal{A}_{i}\cap\mathcal{U}=\emptyset$. To analyze this attack, we consider two cases: $0<p\leq 1/4$ and $1/4<p<1/2$. If $0<p\leq 1/4$, it can be shown that the attackers might fail to learn $s_{i}$ only when both $\mathcal{A}_{i}\cap\mathcal{U}=\emptyset$ and the number of non-zero elements exceeds the number of zeros. Let $p(E)$ to be the probability of such event, then
\begin{equation}\label{jhe}
p(E)\leq2\sum_{j\geq\lfloor (K+1)/2\rfloor}{K \choose j}p^{j}(1-2p)^{K-j}.
\end{equation}
From application of binomial expansion,
\begin{eqnarray}\nonumber
&(3/2)^{K}=\sum_{j=0}^{\lfloor (K-1)/2\rfloor}{K \choose j}\dfrac{1}{2^{j}}\\ \nonumber
&+\sum_{j=0}^{\lfloor K/2\rfloor}{K \choose j+\lfloor (K+1)/2\rfloor}\dfrac{1}{2^{j+\lfloor (K+1)/2\rfloor}},
\end{eqnarray}
we obtain
\begin{eqnarray}\label{namosavi}
\sum_{j=0}^{\lfloor K/2\rfloor}{K \choose j+\lfloor (K+1)/2\rfloor}\dfrac{1}{2^{j}}\leq 2^{\lfloor (K+1)/2\rfloor}(3/2)^{K}.
\end{eqnarray}
Rewrite the right hand side of (\ref{jhe}) as follows
\begin{eqnarray}\nonumber
&\small{2p^{\lfloor(K+1)/2\rfloor}(1-2p)^{\lfloor K/2\rfloor}\sum_{j=0}^{\lfloor K/2\rfloor}{K \choose j+\lfloor(K+1)/2\rfloor}\left(\dfrac{p}{1-2p}\right)^{j}}\\ \nonumber
&\overset{(a)}{\leq}\small{2p^{\lfloor(K+1)/2\rfloor}(1-2p)^{\lfloor K/2\rfloor}\sum_{j=0}^{\lfloor K/2\rfloor}{K \choose j+\lfloor(K+1)/2\rfloor}\dfrac{1}{2^{j}}}\\ \nonumber
&\overset{(b)}{\leq} \small{2(2p)^{\lfloor(K+1)/2\rfloor}(1-2p)^{\lfloor K/2\rfloor}(3/2)^{K}}\leq2\left(\dfrac{3}{4}\right)^{K}.
\end{eqnarray}

Inequality $(a)$ follows because $p\leq1/4$, thus $2p\leq1-2p$. Inequality $(b)$ follows from (\ref{namosavi}) and the last inequality is due to the fact that $\max\{x^{q+1}(1-x)^{q},x^{q}(1-x)^{q}\}\leq \dfrac{1}{4^{q}}$, when $0\leq x\leq 1-x,\ q\in\mathbb{N}$.

If $1/4<p<1/2$, we assume the attackers make an error any time $\mathcal{A}_{i}\cap\mathcal{U}=\emptyset$ or equivalently when $(f_{i,1},...,f_{i,K})$ does not include the pair $(-1,1)$. This happens with probability less than $2(1-p)^{K}-(1-2p)^{K}$ which is also less than $2\left(3/4\right)^{K}$ for $1/4<p<1/2$. Hence, $p(E)\leq2\left(3/4\right)^{K}$. The result is immediate from the union bound.
\\


\textit{Proof of Theorem \ref{tardos2}}\\
Let $S=\sum_{i=1}^{N}\mathbb{I}_{\{\textbf{A}_{i}=[0,...,0]\}}$. Hence, $S$ is the number of rows that colluders can not estimate their corresponding fingerprints correctly, and let $p(e_{i})$ to be the probability of event $\{\mathbb{I}_{\{\textbf{A}_{i}=[0,...,0]\}}=1\}$. We obtain
\begin{eqnarray}\label{exp}
&\mathbb{E}[S]=\mathbb{E}[\mathbb{E}[S|e_{i}, 1\leq i\leq N]]\\ \nonumber
&=\sum_{i=1}^{N}\mathbb{E}[p(e_{i})]=\sum_{i=1}^{N}\mathbb{E}[\varrho^{K}_{i}+(1-\varrho_{i})^{K}]\\ \nonumber
&=N\left(\int_{t}^{\pi/2-t}\sin^{2K}(x)+\cos^{2K}(x)dx\right)/(\pi/2-2t).
\end{eqnarray}
Using the fact that
\begin{eqnarray} \nonumber
\int\sin^{n}(x)dx=-\frac{\sin^{n-1}(x)\cos(x)}{n}+\frac{n-1}{n}\int\sin^{n-2}(x)dx,
\end{eqnarray}
we obtain
\begin{eqnarray}\label{sin}
&\int_{t}^{\pi/2-t}\sin^{2K}(x)dx\\ \nonumber
&\leq \sin(t)+\left(\frac{2K-1}{2K}\times\cdots\times\frac{1}{2}\right)(\pi/2-2t).
\end{eqnarray}
Similarly,
\begin{eqnarray} \label{cos}
&\int_{t}^{\pi/2-t}\cos^{2K}(x)dx\\ \nonumber
&\leq \sin(t)+\left(\frac{2K-1}{2K}\times\cdots\times\frac{1}{2}\right)(\pi/2-2t).
\end{eqnarray}
Substituting (\ref{sin}) and (\ref{cos}) into (\ref{exp}), gives us 
\begin{eqnarray}\nonumber
&\frac{\mathbb{E}[S]}{N}\leq 2\frac{\sin(t)}{\pi/2-2t}+2\left(\frac{2K-1}{2K}\times\cdots\times\frac{1}{2}\right)\\ \nonumber
&=\frac{2}{\sqrt{300K}(\pi/2-2t)}+2\frac{{2K \choose K}}{4^{K}}\\ \nonumber
&\overset{a}\simeq\frac{2}{\sqrt{300K}(\pi/2-2t')}+\frac{2}{\sqrt{\pi K}}:=\frac{C}{\sqrt{K}}.
\end{eqnarray}
Where (a) uses a well known approximation which is ${2K\choose K}\simeq\frac{4^{K}}{\sqrt{\pi K}}$. Furthermore, we obtain
\begin{eqnarray}\nonumber
&Var[S]=\mathbb{E}[\mathbb{E}[S^{2}|e_{i}, 1\leq i\leq N]]-\mathbb{E}[S]^{2}\\ \nonumber
&=\sum_{i=1}^{N}\mathbb{E}[p(e_{i})]-\mathbb{E}[p(e_{i})]^{2}\\ \nonumber 
&\overset{b}\leq N(C/\sqrt{K}-C^{2}/K).
\end{eqnarray}
(b) is true since $x-x^{2}$ is an increasing function for $0<x<0.5$. Finally, Chernoff's bound will establish the result.
\\

\textit{Proof of Theorem \ref{ltardos}}\\
First we compute the mean and variance of $S$, i.e., the number of rows in $\textbf{A}$ that attackers can not estimate. $S$ can be written as 
$$
S=\sum_{i=1}^{N}\mathbb{I}_{\{\textbf{A}_{i}=[0,...,0]\}}
$$
where $\mathbb{I}$ is the indicator function. Let $e_{i}$ to be the event that $i$th row of $\textbf{A}$ is entirely zeros. This event happens when either $[f_{i,1},...,f_{i,K}]$ is a vector of all zeros or a vector of all ones. Therefore, 
\begin{equation}
p(e_{i})=\prod_{j=1}^{K}p_{j}+\prod_{j=1}^{K}(1-p_{j}),  1\leq i\leq N,
\end{equation}
where $p_{j}=\sin^{2}(r_{j})$ and $r_{j}\sim Uniform(t',\pi/2-t')$.  Now one can obtain
\begin{eqnarray}\nonumber
&\mathbb{E}[||\textbf{f}_{1}-\widehat{\textbf{f}}_{1}||^{2}]\leq \mathbb{E}[S] \\ \nonumber
&=\mathbb{E}[\mathbb{E}[S|e_{i},1\leq i\leq N]]=N\mathbb{E}[p(e_{i})]\\ \nonumber
&\overset{a}=N\left(\mathbb{E}[p_{1}]^{K}+(1-\mathbb{E}[p_{1}])^{K}\right) \\  \label{mean1}
&\overset{b}=N/2^{K-1},
\end{eqnarray}
where (a) is true since $p_{j}$s are i.i.d and (b) is because of $\int_{t}^{\pi/2-t}\sin^{2}(r)\frac{dr}{\pi/2-2t}=1/2$.
\begin{small}
\begin{eqnarray}\nonumber
&\mathbb{E}[S^{2}]=\mathbb{E}[\mathbb{E}[S^{2}|e_{i},1\leq i\leq N]]\\ \nonumber
&\overset{c}=\mathbb{E}[N^{2}p(e_{i})^{2}+Np(e_{i})-Np(e_{i})^{2}]\\ \nonumber
&=(N^{2}-N)\mathbb{E}[p(e_{i})^{2}]+N\mathbb{E}[p(e_{i})]\\ \nonumber
&=(N^{2}-N)\left(\mathbb{E}[p_{1}^{2}]^{K}+\mathbb{E}[(1-p_{1})^{2}]^{K}+2\mathbb{E}[p_{1}(1-p_{1})]^{K}\right)\\ \label{var1}
&+N\mathbb{E}[p(e_{i})]\overset{d}\leq (N^{2}-N)\left(3\left(\frac{3}{8}\right)^{K}\right)+N/2^{K-1} ,
\end{eqnarray}
\end{small}
where (c) is due to the fact that $S|(e_{i},1\leq i\leq N)$ is a binomial distribution with parameter $p(e_{i})$ and (d) is because of $5/16<\int_{t}^{\pi/2-t}\sin^{4}(r)\frac{dr}{\pi/2-2t}<3/8$. Using (\ref{mean1}), (\ref{var1}), and Chernoff's bound, we obtain
\begin{small}
\begin{eqnarray}\nonumber
&P\left(S\geq  N/K\right)<\\ \nonumber
&P\left(S\geq  N/2K+ N/2^{K-1}\right)=P\left(|S-N/2^{K-1}|>N/2K\right)\leq\\ \nonumber
&4K^2\left(3\left(\frac{3}{8}\right)^{K}-4\left(\frac{2}{8}\right)^{K}\right)+\frac{4K^2}{N}\left(2\left(\frac{4}{8}\right)^{K}-3\left(\frac{3}{8}\right)^{K}\right)\\ \nonumber
&\overset{e}\leq 12K^2(\dfrac{3}{8})^K+\dfrac{8K^2}{N}(\dfrac{4}{8})^K,
\end{eqnarray}
\end{small}
(e) is true because of the fact that $3\cdot3^{K}-4\cdot2^{K}<3\cdot3^{K}$ and $2\cdot4^{K}-3\cdot3^{K}<2\cdot4^{K}$. Therefore, for $K>6$ with probability at least $1-12K^2(3/8)^K-8K^2(1/2)^K/N$, we have $S<N/K$. In this case, using the Hoeffding's inequality and union bound we get
\begin{equation}
P\left(\max_{i\in\mathcal{K}}|p_{i}-\widehat{p}_{i}|\geq\rho\right)\leq 2Ke^{-2(N-S)\rho^{2}} ,
\end{equation}
By choosing $\rho=\sqrt{\frac{\log N}{N}}$, we obtain the result.\\

\end{document}